\documentclass[runningheads]{llncs}

\usepackage{amsmath,amssymb,amsfonts}
\usepackage{microtype}
\usepackage{booktabs}
\usepackage{xcolor}
\usepackage[hidelinks]{hyperref}

\usepackage[ruled,vlined]{algorithm2e}
\SetKwProg{Fn}{Function}{}{end}
\SetKwFunction{FRecurs}{FnRecursive}

\usepackage{wrapfig}

\usepackage{graphicx}
\usepackage{caption}
\usepackage{subcaption}
\usepackage{tikz}
\usetikzlibrary{arrows.meta}

\newcommand{\inc}[1]{\textcolor{blue}{#1}}

\title{Incremental Neural Network Verification via Learned Conflicts}
\author{
Raya Elsaleh\inst{1} \and
Liam Davis\inst{2} \and
Haoze Wu\inst{2} \and
Guy Katz\inst{1}
}

\institute{
Hebrew University of Jerusalem, Jerusalem, Israel \\
\email{\{raya.elsaleh,g.katz\}@mail.huji.ac.il}
\and
Amherst College, Amherst, USA \\
\email{\{ljdavis27,hwu\}@amherst.edu}
}

\begin{document}
\maketitle

\begin{abstract}
Neural network verification is often used as a core component within larger
analysis procedures, which generate sequences of closely related verification
queries over the same network.
In existing neural network verifiers, each query is typically solved
independently, and information learned during previous runs is discarded, leading
to repeated exploration of the same infeasible regions of the search space.
In this work, we aim to expedite verification by reducing this redundancy.
We propose an incremental verification technique that reuses learned conflicts
across related verification queries. The technique can be added on top of any 
branch-and-bound-based neural network verifier.
During verification, the verifier records conflicts corresponding to learned 
infeasible combinations of activation phases, and retains them across runs.
We formalize a refinement relation between verification queries and show that
conflicts learned for a query remain valid under refinement, enabling sound
conflict inheritance.
Inherited conflicts are handled using a SAT solver to perform consistency checks
and propagation, allowing infeasible subproblems to be detected and pruned early
during search.
We implement the proposed technique in the Marabou verifier and evaluate it on three verification
tasks: local robustness radius determination, verification with input splitting,
and minimal sufficient feature set extraction.
Our experiments show that incremental conflict reuse reduces verification effort
and yields speedups of up to 1.9$\times$ over a non-incremental baseline.
\end{abstract}

\section{Introduction}

Deep neural networks (DNNs) are increasingly deployed in safety-critical
applications~\cite{AmOlStChScMa16}, including autonomous driving~\cite{autonomous_driving,PaGoYu19},
medical diagnosis~\cite{medical_ai}, and aerospace
systems~\cite{robustness_airbus,reluplex}. Their strong empirical performance has driven
major advances in tasks such as control and image
recognition. Despite this success, neural networks remain
largely opaque and difficult to reason about, raising serious concerns
regarding their reliability and safety in critical settings.

To address this issue, numerous approaches have been developed for neural
network verification, with the goal of providing rigorous guarantees about network
behavior. For networks with piecewise-linear activation functions,
such as ReLUs, the verification problem is NP-complete~\cite{reluplex},
which hinders scalability; but as neural networks are adopted across
increasingly diverse and high-stakes domains, the demand for scalable
verification techniques continues to grow.  Improving the efficiency
of neural network verification is therefore imperative.

Examining how DNN verification is used in practice, we observe that it
is often not performed as a
single, isolated query, but rather invoked repeatedly within larger analysis
procedures.
For example, in formal explainability, verification queries are issued repeatedly
to reason about the contribution of different input features to a given
prediction under progressively refined 
constraints~\cite{towards_xai,verix_plus,VeriX,BoElDuBaKa26}; similarly, in robustness
radius computation, verification queries are invoked iteratively to narrow down
the maximum safe perturbation radius around a given input~\cite{robustness_airbus,robustness,StWuZeJuKaBaKo23}.
Such analyses naturally give rise to sequences of closely related verification
queries that differ in limited aspects of their specifications, such as
refined input domains or strengthened output constraints.
Despite this, current verification tools do not explicitly exploit this structural
similarity:
each query is restarted from
scratch, and information derived during previous runs is discarded.

In this work, we seek to mitigate these inefficiencies
by reusing lemmas derived from earlier verification runs to 
accelerate verification of subsequent queries. 
Similar incremental solving techniques have proven successful in SAT and SMT
solving~\cite{smt_handbook,inc_smt,incremental_sat_1}. In the case of neural network
verification, previous work has considered proof transfer for abstract-interpretation-based
methods~\cite{proof_transfer} and warm-starting 
branch-and-bound by heuristically resuming the search from leaf nodes of search trees generated 
from prior solver runs~\cite{ivan,ivan_2}. However, none of the previous work has explored 
reusing lemmas across multiple invocations of branch-and-bound-based complete verifiers, 
which is the focus of this work.

Our incremental verification approach is designed for sequences of closely
related properties on the same neural network.
Our approach records conflicts that arise during branch-and-bound verification,
where each conflict captures an infeasible combination of branching decisions.
These conflicts are preserved beyond individual verification runs and reused
in subsequent queries with refined specifications. We formally define a refinement
condition for the conflict to remain valid and show that this condition can be
established in several important applications of neural network verification.
The inherited conflicts allow the verifier to immediately prune previously
explored infeasible regions, avoiding redundant analysis and computation.

We develop a framework for conflict recording and sound reuse that integrates directly
with branch-and-bound-based verifiers, and employ a SAT solver to efficiently
manage and apply large collections of learned conflicts during solving.
The technique can be added on top of any branch-and-bound-based neural network
verifier. To investigate the effectiveness of the proposed approach,
we instantiate it in the Marabou verifier~\cite{marabou,reluplex,marabou2}
and perform a thorough evaluation on three representative verification
tasks---robustness radius computation, iterative input splitting, and formal 
explanation---and demonstrate consistent empirical reductions in verification runtime, with speedups of up to
$1.9\times$ compared to the non-incremental baseline.

The rest of the paper is organized as follows.
Section~\ref{sec:background} introduces the necessary background on neural network
verification, branch-and-bound search, case splitting, and bound propagation.
Section~\ref{sec:incremental-verification} presents our incremental verification
framework, including conflict clauses, query refinement, and the sound reuse of
learned conflicts, and describes the implementation details of the proposed
approach.
Section~\ref{sec:verification-tasks} evaluates the approach on several
verification tasks and discusses the experimental results.
Section~\ref{sec:related-work} reviews related work.
Finally, Section~\ref{sec:future-work} outlines directions for future work, and
Section~\ref{sec:conclusion} concludes the paper.

\section{Background}
\label{sec:background}

\subsection{Neural Networks and the Verification Problem}
\paragraph{Neural Networks.}
A deep neural network is a sequence of $L$ layers, where layer $i$ contains
$d^{(i)}$ neurons, with $i=0$ denoting the input layer and $i=L$ the output
layer~\cite{DeepLearning}. Such a network defines a function
$f:\mathbb{R}^{d^{(0)}}\to\mathbb{R}^{d^{(L)}}$.
Each layer applies an affine transformation followed by an activation function.
In this work, we focus on networks with ReLU activations,
$\mathrm{ReLU}(t)=\max(0,t)$, though the approach extends naturally to other
piecewise-linear activations.

For each layer $i$, the computation is given by
$z^{(i)}=\mathbf{W}^{(i)}x^{(i-1)}+\mathbf{b}^{(i)}$ and
$x^{(i)}=\mathrm{ReLU}(z^{(i)})$, where $x^{(0)}=x_0$ is the network input.
We denote by $z^{(i)}_j$ and $x^{(i)}_j$ the pre- and post-activation values of the
$j$-th neuron in layer $i$.
Each ReLU neuron is associated with a Boolean phase variable $r^{(i)}_j$,
where $r^{(i)}_j=\top$ corresponds to the active phase ($z^{(i)}_j\ge 0$) and
$r^{(i)}_j=\bot$ to the inactive phase ($z^{(i)}_j<0$).

\paragraph{Verification Queries.}
Neural network verification asks whether a network can exhibit undesired
behavior.
A verification query for a network
$f:\mathbb{R}^{d^{(0)}}\to\mathbb{R}^{d^{(L)}}$ is defined by a pair
$(\mathcal{X},\mathcal{Y})$, where
$\mathcal{X}\subseteq\mathbb{R}^{d^{(0)}}$ and
$\mathcal{Y}\subseteq\mathbb{R}^{d^{(L)}}$ specify the input and output regions,
respectively, and asks whether $\exists x\in\mathcal{X}$ such that $f(x)\in\mathcal{Y}$.
Typically, both regions are described by linear constraints, with
$\mathcal{Y}$ representing an undesired output set.
A query is answered \textsf{SAT} if such an input exists and \textsf{UNSAT}
otherwise.

\subsection{Branch-and-Bound Verification}
Most modern neural network verifiers combine case splitting with bound
propagation, a paradigm commonly referred to as branch-and-bound, to reason about
the disjunctive behavior induced by ReLU activations%
~\cite{reluplex,marabou,crown1,crown2,dependency_1,neuralsat,nnenum,nnv}.
In this work, we focus on the case-splitting component.

\paragraph{Case Splitting.}
Verification of ReLU neural networks relies on case splitting over ReLU activation
phases.
Each split fixes a single phase variable $r^{(i)}_j$ to either the active or
inactive case, i.e., $r^{(i)}_j=\top$ or $r^{(i)}_j=\bot$, thereby refining the
original verification query into independent subproblems.

\paragraph{Branch-and-Bound Search and Search Tree.}
Given a neural network $f$ and a verification query
$q=(\mathcal{X},\mathcal{Y})$, branch-and-bound verification explores the space of
ReLU activation phases induced by $q$ by repeated case splitting, inducing a
search tree (Figure~\ref{fig:bab-tree}).

The root of the tree corresponds to the original query with no fixed phase decisions.
Each node is associated with a \emph{partial assignment}, also called a \emph{decision
trail}, $\pi=\{\ell_1,\ldots,\ell_k\}$, where each literal
$\ell_i\in\{r^{(i)}_j,\neg r^{(i)}_j\}$ fixes the activation phase of a ReLU
neuron.
The decision trail $\pi$ defines a refined subproblem obtained by conjoining the
corresponding phase constraints with the original query, and edges correspond to
individual case splits extending $\pi$ by one additional literal.
For readability, Figure~\ref{fig:bab-tree} uses $r_1,r_2,r_3,r_4$ to denote phase variables. 
For example, the double-circled node in Figure~\ref{fig:bab-tree} corresponds to
the decision trail $\pi=\{\neg r_1,\, r_2\}$.
At each node, the verifier applies bound propagation to compute
over-approximations of neuron pre- and post-activation values.
If these bounds imply that no input consistent with $\pi$ can satisfy the query,
then $\pi$ is infeasible and the node is declared \textsf{UNSAT} and its subtree is pruned.
If a concrete input $x\in\mathcal{X}$ consistent with $\pi$ satisfies
$f(x)\in\mathcal{Y}$, the node is declared \textsf{SAT} (e.g., the
$\pi=\{r_1,\neg r_2\}$ leaf in Figure~\ref{fig:bab-tree}), after which the search
terminates and unexplored branches need not be visited.

The search terminates when either a \textsf{SAT} leaf is found or all explored
leaves are \textsf{UNSAT}.
In the latter case, no violating input exists and the property is verified.
This branch-and-bound paradigm underlies modern neural network 
verifiers~\cite{marabou,reluplex,crown1,crown2,dependency_1,nnv}.
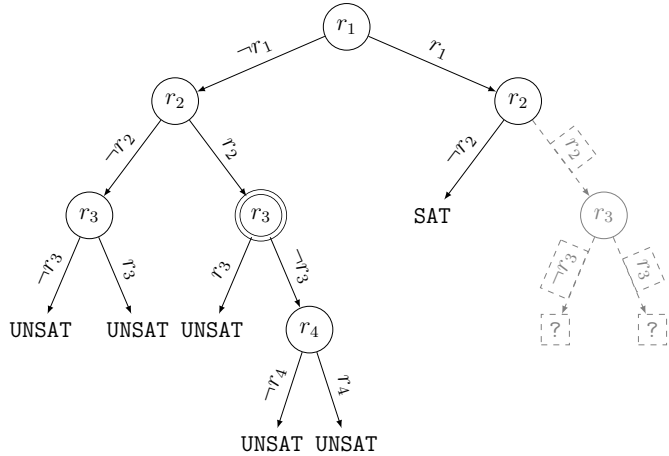
\begin{figure}[t]
\centering
{\large
\scalebox{0.76}{
\begin{tikzpicture}[
  level 1/.style={sibling distance=6cm, level distance=1.3cm},
  level 2/.style={sibling distance=3cm, level distance=2cm},
  level 3/.style={sibling distance=1.7cm, level distance=2cm},
  level 4/.style={sibling distance=1.3cm, level distance=2cm},
  node/.style={circle, draw, minimum size=.6cm},
  edge/.style={->, >=latex},
  pruned/.style={draw=gray, text=gray},
  prunedEdge/.style={->, >=latex, draw=gray, dashed},
]
\node[node] (a) {$r_1$}
child { node[node] (b) {$r_2$}
  child { node[node] (c) {$r_3$}
    child { node[draw=none] {\texttt{UNSAT}}
      edge from parent[edge] node[sloped, above] {$\neg r_3$} }
    child { node[draw=none] {\texttt{UNSAT}}
      edge from parent[edge] node[sloped, above] {$r_3$} }
    edge from parent[edge] node[sloped, above] {$\neg r_2$} }
  child { node[node, double, double distance=2pt] (d) {$r_3$}
    child { node[draw=none] {\texttt{UNSAT}}
      edge from parent[edge] node[sloped, above] {$r_3$} }
    child { node[node] {$r_4$}
      child { node[draw=none] {\texttt{UNSAT}}
        edge from parent[edge] node[sloped, above] {$\neg r_4$} }
      child { node[draw=none] {\texttt{UNSAT}}
        edge from parent[edge] node[sloped, above] {$r_4$} }
      edge from parent[edge] node[sloped, above] {$\neg r_3$} }
      edge from parent[edge] node[sloped, above] {$r_2$} }
  edge from parent[edge] node[sloped, above] {$\neg r_1$} }
child { node[node] (e) {$r_2$}
  child { node[draw=none] (satleaf) {\texttt{SAT}}
    edge from parent[edge] node[sloped, above] {$\neg r_2$} }
  child { node[node, pruned] (g) {$r_3$}
    child { node[draw=none, pruned] {\texttt{?}}
      edge from parent[prunedEdge] node[sloped, above, pruned] {$\neg r_3$} }
    child { node[draw=none, pruned] {\texttt{?}}
      edge from parent[prunedEdge] node[sloped, above, pruned] {$r_3$} }
    edge from parent[prunedEdge] node[sloped, above, pruned] {$r_2$} }
  edge from parent[edge] node[sloped, above] {$r_1$} }
;
\end{tikzpicture}
}}
\caption{Branch-and-bound search tree over ReLU phase decisions.}
\label{fig:bab-tree}
\end{figure}

\paragraph{Bound Propagation.}
Bound propagation analyzes the feasibility of subproblems arising during
branch-and-bound search (i.e., nodes in the search tree) by computing
over-approximations of neuron values under a partial assignment $\pi$%
~\cite{deeppoly,crown4bp,reluval}.
These bounds detect infeasible assignments and implied activation phases,
enabling early pruning of the search tree
(e.g., the \textsf{UNSAT} leaves in Figure~\ref{fig:bab-tree}).
% --------------------------------------------------

\section{Incremental Verification via Learned Conflicts}
\label{sec:incremental-verification}

\subsection{Conflicts and Query Refinement}
\paragraph{Learned Conflicts.}
During branch-and-bound verification, infeasible subproblems encountered along
the search can be recorded and reused in the form of conflict clauses.

\begin{definition}[Conflict Clause]
\label{def:conflict-clause}
A conflict clause for a verification query $q$ is a set of literals
$c = \{\ell_1, \ldots, \ell_k\}$ such that the corresponding CNF clause
\(
(\neg \ell_1 \lor \neg \ell_2 \lor \cdots \lor \neg \ell_k)
\)
is logically implied by $q$.
\end{definition}

When a decision trail $\pi = \{\ell_1, \ldots, \ell_k\}$ is found to be \textsf{UNSAT},
the infeasible combination can be summarized by the conflict clause
$(\neg \ell_1 \lor \cdots \lor \neg \ell_k)$.

\paragraph{Query Refinement.}
To enable sound reuse of conflict clauses across verification queries,
we formalize when one query is a \emph{refinement} of another.

\begin{definition}[Query Refinement]
\label{def:refinement}
A verification query $q_2$ is a \emph{refinement} of another query $q_1$, denoted
$q_2 \preceq q_1$, if both queries are defined over the same network $f$ and
\[
\mathcal{X}(q_2) \subseteq \mathcal{X}(q_1)
\quad \text{and} \quad
\mathcal{Y}(q_2) \subseteq \mathcal{Y}(q_1).
\]
\end{definition}

Intuitively, $q_2$ imposes stricter constraints than $q_1$, and therefore admits
a smaller feasible region. In particular, any conjunction of constraints that is
infeasible under $q_1$ remains infeasible under $q_2$.
We formalize this monotonicity of infeasibility in
Lemma~\ref{lem:monotonicity}.

In practice, many verification workflows progressively strengthen the input
domain while leaving the output constraints unchanged, as in the use cases
considered below. Nevertheless, our framework supports refinements of both input
and output domains.

\subsection{Soundness of Conflict Reuse}

\begin{lemma}[Monotonicity of Infeasibility Under Refinement]
\label{lem:monotonicity}
Let $q_1$ and $q_2$ be verification queries such that $q_2 \preceq q_1$, and let
$\{\ell_1,\ldots,\ell_k\}$ be a set of literals.
If the subproblem
\[
q_1 \land \ell_1 \land \cdots \land \ell_k
\]
is infeasible, then the subproblem
\[
q_2 \land \ell_1 \land \cdots \land \ell_k
\]
is also infeasible.
\end{lemma}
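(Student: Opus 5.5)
The plan is to argue by contraposition, through the semantics of a subproblem as the set of concrete inputs that witness it. I will first make precise what infeasibility of $q \land \ell_1 \land \cdots \land \ell_k$ means, for a query $q=(\mathcal{X},\mathcal{Y})$ over $f$. It means there is no input $x \in \mathcal{X}$ with $f(x) \in \mathcal{Y}$ such that, for each literal $\ell_i$, the neuron it refers to has the prescribed phase when the network is evaluated on $x$. Concretely, $\ell_i = r^{(i)}_j$ requires $z^{(i)}_j(x) \ge 0$, and $\ell_i = \neg r^{(i)}_j$ requires $z^{(i)}_j(x) < 0$. The point to fix here is that the literal constraints depend only on the network $f$ and the input $x$, not on $\mathcal{X}$ or $\mathcal{Y}$. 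By Definition~\ref{def:refinement}, $q_1$ and $q_2$ share the same network, so each literal denotes the same constraint in both subproblems.

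With this semantics, define the solution set
\[
S_q(\pi) = \{x \in \mathcal{X}(q) : f(x) \in \mathcal{Y}(q),\ x \text{ satisfies } \ell_1,\ldots,\ell_k\}.
\]
Infeasibility means $S_q(\pi)=\emptyset$. The core step is the inclusion $S_{q_2}(\pi) \subseteq S_{q_1}(\pi)$. Take any $x \in S_{q_2}(\pi)$. Then $x \in \mathcal{X}(q_2) \subseteq \mathcal{X}(q_1)$ and $f(x) \in \mathcal{Y}(q_2) \subseteq \mathcal{Y}(q_1)$. Moreover, $x$ satisfies the literals, which are identical in both queries. Hence $x \in S_{q_1}(\pi)$.

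Since $S_{q_1}(\pi)=\emptyset$ by hypothesis, $S_{q_2}(\pi)=\emptyset$ as well, which is the claim. Equivalently, any witness for the $q_2$-subproblem would already be a witness for the $q_1$-subproblem.

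I do not expect a genuine technical obstacle. The only delicate point is definitional: the argument requires that infeasibility mean semantic emptiness of the solution set. It must not mean ``the verifier's bound propagation on the query derives a contradiction,'' because the bounds computed for $q_1$ and $q_2$ differ. I will state this interpretation explicitly. If one instead wants the procedural reading, one could add a remark that bound propagation is sound and over-approximates, so any conflict it reports for $q_1$ reflects semantic infeasibility; the semantic argument then carries over. A corollary worth noting is that, by Definition~\ref{def:conflict-clause}, every conflict clause for $q_1$ is also a conflict clause for $q_2$.
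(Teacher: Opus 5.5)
Your proof is correct and follows the same witness-transfer argument as the paper: an input that satisfies the $q_2$-subproblem also satisfies the $q_1$-subproblem, because of the inclusions in Definition~\ref{def:refinement} and because the phase literals denote the same constraints over the shared network. Your version is slightly more careful than the paper's, since it explicitly uses $\mathcal{Y}(q_2)\subseteq\mathcal{Y}(q_1)$ and fixes the semantic meaning of infeasibility, whereas the paper's proof leaves both implicit.
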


\noindent
Lemma~\ref{lem:monotonicity} formalizes the monotonicity of infeasibility under
query refinement. The proof is provided in
Appendix~\ref{app:conflict-soundness}.

\begin{theorem}[Sound Conflict Reuse under Refinement]
\label{thm:conflict-reuse}
Let $q_1$ and $q_2$ be verification queries such that $q_2 \preceq q_1$.
Let $c=\{\ell_1,\ldots,\ell_k\}$ be a conflict clause for $q_1$.
Then $c$ is also a conflict clause for $q_2$.
\end{theorem}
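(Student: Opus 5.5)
The plan is to reduce the theorem to Lemma~\ref{lem:monotonicity} by translating between the two formulations. The lemma is stated in terms of infeasibility of a conjunction. Definition~\ref{def:conflict-clause} is stated in terms of logical implication of a clause.

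First I will unfold Definition~\ref{def:conflict-clause} for $q_1$. The clause $(\neg\ell_1\lor\cdots\lor\neg\ell_k)$ is implied by $q_1$. This means every assignment to the network variables satisfies the following three conditions:
\begin{itemize}
\item the input lies in $\mathcal{X}(q_1)$;
\item the network semantics hold, i.e., the constraints linking $z^{(i)}$, $x^{(i)}$ and the phase variables $r^{(i)}_j$;
\item the output lies in $\mathcal{Y}(q_1)$;
\end{itemize}
and every such assignment also falsifies at least one $\ell_i$. By the standard duality between validity and unsatisfiability, this is equivalent to the subproblem $q_1\land\ell_1\land\cdots\land\ell_k$ having no satisfying assignment, i.e., being infeasible.

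Second, I will apply Lemma~\ref{lem:monotonicity}, using the hypothesis $q_2\preceq q_1$. This gives that $q_2\land\ell_1\land\cdots\land\ell_k$ is infeasible.

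Third, I will run the same duality in reverse. Every assignment satisfying $q_2$ must violate some $\ell_i$, so $q_2$ implies $(\neg\ell_1\lor\cdots\lor\neg\ell_k)$. Hence $c$ is a conflict clause for $q_2$.

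If I prefer not to lean on the lemma, there is a direct semantic argument. Take any model of $q_2$. Since $\mathcal{X}(q_2)\subseteq\mathcal{X}(q_1)$ and $\mathcal{Y}(q_2)\subseteq\mathcal{Y}(q_1)$, and the network $f$ is the same, it is also a model of $q_1$. It therefore satisfies the clause.

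The only delicate point is the first step, namely fixing what ``$q$'' means as a logical formula. It must include the phase semantics $r^{(i)}_j\leftrightarrow z^{(i)}_j\ge 0$, so that the literals are meaningful constraints. It must also be the same encoding of $f$ for both queries, which refinement guarantees. Once the variable set and the network constraints are identified as shared, the equivalence between ``implies the clause'' and ``conjunction is infeasible'' is immediate, and the rest is routine.
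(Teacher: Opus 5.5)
Your proposal is correct and matches the paper's proof: you unfold Definition~\ref{def:conflict-clause} to get infeasibility of $q_1\land\ell_1\land\cdots\land\ell_k$, apply Lemma~\ref{lem:monotonicity}, and read the conclusion back as $q_2$ implying the clause; the only difference is that you state explicitly the implication/infeasibility duality that the paper uses silently. One small wording fix: in your first step you mean that every assignment \emph{satisfying} the three conditions falsifies some $\ell_i$, not that every assignment satisfies those conditions.
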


\noindent
Theorem~\ref{thm:conflict-reuse} follows directly from
Lemma~\ref{lem:monotonicity} and establishes the soundness of conflict inheritance
across refined queries. A detailed proof is given in
Appendix~\ref{app:conflict-soundness}.

As a result, conflicts learned for one query can be reused to prune infeasible
regions in refined queries without re-exploration.

\subsection{Using Conflicts during Verification}

Having established the soundness of conflict reuse under query refinement, we now
describe how conflict clauses are used during verification to prune the search
space.
Specifically, we employ a SAT solver to reason about inherited conflict clauses
during branch-and-bound search as an additional pruning and propagation step.

\paragraph{Checking Consistency via SAT.}
At the start of each verification query, the verifier inherits a set of conflict
clauses $\mathcal{C}$ learned from previous related queries.
These clauses are encoded once as CNF clauses in a SAT solver.

At each node of the branch-and-bound search tree, after standard bound propagation
is applied, the verifier invokes the SAT solver to check the consistency of the
current partial assignment with the inherited conflict clauses $\mathcal{C}$.
Let $\alpha$ denote the current partial assignment to ReLU phase variables.
Each literal $\ell \in \alpha$ is asserted as a unit assumption in the SAT solver.

\paragraph{Soundness of SAT-Based Pruning.}
SAT-based reasoning under assumptions can yield two outcomes: either the
current trail 
is \textsf{UNSAT}, certifying the subproblem as infeasible, or unit
propagation derives implied assignments that further restrict the search space.
An \textsf{UNSAT} result precludes any extension of the current partial
assignment, while implied assignments enforce necessary literals without
excluding feasible solutions.

\begin{lemma}[Soundness of SAT-Based Pruning]
\label{lem:sat-pruning}
Let $\alpha$ be a partial assignment and $\mathcal{C}$ a set of conflict clauses
implied by a verification query $q$.
If the propositional CNF formula constructed from $\alpha$ and $\mathcal{C}$ is
\textsf{UNSAT}, then no extension of $\alpha$ can correspond to a feasible
solution for $q$.
\end{lemma}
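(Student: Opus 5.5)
We argue by contraposition, using the semantics of phase variables induced by concrete inputs. Suppose some extension $\beta \supseteq \alpha$ corresponds to a feasible solution for $q$. Then there is a concrete input $x \in \mathcal{X}(q)$ with $f(x) \in \mathcal{Y}(q)$ whose activation pattern is consistent with $\beta$. Every concrete input $x$ induces a total Boolean assignment $\sigma_x$ to all phase variables: $\sigma_x(r^{(i)}_j)=\top$ iff $z^{(i)}_j(x)\ge 0$. Consistency of $x$ with $\beta$ means $\sigma_x$ agrees with every literal of $\beta$, and in particular with every literal of $\alpha$.

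Next we check that $\sigma_x$ satisfies every clause in $\mathcal{C}$. Take $c=\{\ell_1,\dots,\ell_k\}\in\mathcal{C}$. By Definition~\ref{def:conflict-clause}, $q$ logically implies $(\neg\ell_1\lor\cdots\lor\neg\ell_k)$. Since $x$ is a model of $q$ (it lies in $\mathcal{X}(q)$, reaches $\mathcal{Y}(q)$, and the phase variables take their induced values), the implied clause holds at $\sigma_x$. So $\sigma_x$ falsifies at least one $\ell_i$. If the clauses in $\mathcal{C}$ were instead inherited from an earlier query $q_1$ with $q\preceq q_1$, we first invoke Theorem~\ref{thm:conflict-reuse} to conclude that they are conflict clauses for $q$ itself, and the same argument applies.

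Hence $\sigma_x$ satisfies the propositional formula $\mathcal{C}\land\bigwedge_{\ell\in\alpha}\ell$, where the literals of $\alpha$ are unit assumptions. This contradicts the SAT solver's \textsf{UNSAT} verdict, assuming the solver is sound. Therefore no extension of $\alpha$ is feasible for $q$. The same reasoning shows that literals derived by unit propagation under $\alpha$ are true in $\sigma_x$ for every feasible $x$ consistent with $\alpha$. This justifies the remark that implied assignments exclude no feasible solutions.

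The main obstacle is making the phrase ``logically implied by $q$'' precise enough to apply it to $\sigma_x$. This requires fixing a semantics in which $q$ is a formula over both real variables and the Boolean phase variables $r^{(i)}_j$, linked by $r^{(i)}_j \leftrightarrow z^{(i)}_j\ge 0$. A feasible concrete solution then yields a model of that formula whose Boolean projection is $\sigma_x$. Once this link is stated explicitly, the rest of the proof is a direct propositional argument.
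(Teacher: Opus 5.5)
Your argument is correct and is essentially the paper's proof: assume a feasible solution extending $\alpha$, note that its induced total phase assignment satisfies $\alpha$ and, by Definition~\ref{def:conflict-clause}, every clause in $\mathcal{C}$, which contradicts the unsatisfiability of the CNF formula. The only difference is presentational: you make the induced assignment $\sigma_x$ and the link $r^{(i)}_j \leftrightarrow z^{(i)}_j \ge 0$ explicit, where the paper leaves ``corresponds to a feasible solution'' informal. Your appeals to Theorem~\ref{thm:conflict-reuse} and to solver soundness are harmless but unnecessary, since the hypothesis already states that $\mathcal{C}$ is implied by $q$, and UNSAT is a semantic property of the formula.
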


\begin{lemma}[Soundness of SAT-Based Implied Assignments]
\label{lem:sat-implied}
Let $\alpha$ be a partial assignment and $\mathcal{C}$ a set of conflict clauses
implied by a verification query $q$.
Let $\alpha'$ be the assignment obtained by SAT-based reasoning over $\alpha$ and
$\mathcal{C}$, and let $\mathcal{L} = \alpha' \setminus \alpha$ denote the implied
assignments.
Then any feasible solution for $q$ extending $\alpha$ must satisfy all literals
in $\mathcal{L}$.
\end{lemma}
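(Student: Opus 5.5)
The plan is to reduce Lemma~\ref{lem:sat-implied} to the semantic entailment that underlies SAT-based reasoning. Let $x$ be any feasible solution for $q$ whose phase assignment extends $\alpha$. Concretely, $x\in\mathcal{X}(q)$ and $f(x)\in\mathcal{Y}(q)$, and we write $\beta_x$ for the total Boolean assignment to all phase variables $r^{(i)}_j$ induced by the pre-activation values of $x$. By assumption, $\beta_x$ satisfies every literal of $\alpha$.

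\textbf{Step 1: $\beta_x$ satisfies every clause of $\mathcal{C}$.} By Definition~\ref{def:conflict-clause}, each $c=\{\ell_1,\ldots,\ell_k\}\in\mathcal{C}$ is such that $(\neg\ell_1\lor\cdots\lor\neg\ell_k)$ is implied by $q$. Since $x$ satisfies $q$, the literals $\ell_1,\ldots,\ell_k$ cannot all hold under $\beta_x$, so the clause is satisfied. (If $\mathcal{C}$ was inherited from an earlier query, Theorem~\ref{thm:conflict-reuse} guarantees that it is still implied by $q$; the lemma already assumes this.) Therefore $\beta_x$ is a model of the propositional formula $\mathcal{C}\wedge\bigwedge_{\ell\in\alpha}\ell$.

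\textbf{Step 2: implied literals are entailed.} Here I need to fix what ``SAT-based reasoning'' means. I will take it to be unit propagation, or more generally any sound inference rule of the SAT solver run under the assumptions $\alpha$. Every such step derives only literals that are logically entailed by $\mathcal{C}\wedge\alpha$. For unit propagation, I argue by induction on the number of propagation steps. Suppose a clause has all but one literal falsified by the current assignment $\alpha\cup\{\text{previously derived literals}\}$. By the induction hypothesis, every model of $\mathcal{C}\wedge\alpha$ agrees with that current assignment. Such a model must also satisfy the clause, so it must make the remaining literal true. Hence every $\ell\in\mathcal{L}=\alpha'\setminus\alpha$ holds in every model of $\mathcal{C}\wedge\alpha$.

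\textbf{Step 3: conclude.} Combining Steps 1 and 2, $\beta_x$ is one of those models, so it satisfies every literal in $\mathcal{L}$. In other words, $x$ has the phases prescribed by $\mathcal{L}$.

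The main obstacle is a subtlety at phase boundaries. If some neuron has $z^{(i)}_j=0$, the phase assignment $\beta_x$ is not uniquely determined. I would handle this by fixing the convention $z\ge 0\Rightarrow r=\top$ from the background section, so that $\beta_x$ is well defined. I would also read ``feasible solution extending $\alpha$'' as meaning that $x$ satisfies the linear phase constraints of $\alpha$ under that convention. The rest of the argument is routine. It parallels Lemma~\ref{lem:sat-pruning}, where an \textsf{UNSAT} verdict means there is no model at all, and so $\beta_x$ cannot exist.
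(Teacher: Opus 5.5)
Your proof is correct and uses essentially the same reasoning as the paper: the phase assignment of a feasible solution extending $\alpha$ satisfies every clause of $\mathcal{C}$ by Definition~\ref{def:conflict-clause}, so it cannot contradict any literal forced by unit propagation. Your explicit induction over propagation steps is more careful than the paper's claim that $\alpha\cup\{\neg\ell\}$ falsifies some clause of $\mathcal{C}$, since that claim holds only for literals forced directly by $\alpha$ and not for those derived through a chain of propagations.
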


Together, Lemmas~\ref{lem:sat-pruning} and~\ref{lem:sat-implied} establish that
SAT-based reasoning over inherited conflict clauses enables both sound pruning
and sound propagation, reducing the search space while preserving correctness.
Full proofs are provided in Appendix~\ref{app:sat-soundness}.

\subsection{Incremental Verification Workflow and Integration}
This subsection describes the end-to-end workflow by which incremental conflict
reuse is integrated into branch-and-bound verification.
At a high level, verification proceeds over a sequence of related queries, where
conflicts learned during earlier runs are recorded and selectively inherited by
subsequent queries.
Algorithm~\ref{alg:ica} presents the Incremental Conflict Analyser component that
manages conflict storage and SAT-based reasoning, while
Algorithm~\ref{alg:bab-inc} shows how the component is invoked within the
branch-and-bound search loop.

\begin{algorithm}[t]
\caption{\textsc{ICA}: Incremental Conflict Analyser Component}
\label{alg:ica}
\DontPrintSemicolon
\LinesNumbered

\textbf{Fields:}\\
\qquad $\textsc{Pool}$: map $id \mapsto$ set of conflicts $\mathcal{C}_{id}$\;
\qquad $\textsc{Sat}$: SAT solver instance\;

\BlankLine
\Fn{\textsc{BeginQuery}$(\mathcal{I})$}{
    
    \nl
    $\textsc{Sat}.\textsc{Reset}()$\tcp*{fresh SAT instance for this query} \label{ln:ica-begin}

     \ForEach{$id' \in \mathcal{I}$}{
        \ForEach{$c \in \textsc{Pool}[id']$}{
            $\textsc{Sat}.\textsc{AddClauseAsCNF}(c)$\;
        }
    } \label{ln:ica-init-clauses}
}

\BlankLine
\Fn{\textsc{Propagate}$(Bounds)$}{
    \nl
    $\alpha \gets \textsc{ExtractPartialAssignment}(Bounds)$\;
    \label{ln:ica-prop-start}
    \nl
    $res \gets \textsc{Sat}.\textsc{SolveUnderAssumptions}(\alpha)$\;

    \nl
    \uIf{$res=\textsf{UNSAT}$}{
        \nl
        \Return{$(\textsf{UNSAT}, \emptyset)$}\;
    }
    \nl
    $\Delta_{\textsf{sat}} \gets \textsc{Sat}.\textsc{GetUnitImpliedLiterals}()$\;
    \nl
    $Bounds.\textsc{ApplyImpliedLiterals}(\Delta_{\textsf{sat}})$\;

    \nl
    \Return{$(\textsf{SAT}, Bounds)$}\;
    \label{ln:ica-prop-end}
}

\BlankLine
\Fn{\textsc{RecordConflict}$(id,c)$}{
    \nl
    \If{$\exists c' \in \textsc{Pool}[id] \text{ such that } c' \subseteq c$}{
        \label{ln:ica-record-start}
        \nl
        \Return{}\;
    }

    \nl
    $\textsc{Pool}[id] \gets \textsc{Pool}[id] \cup \{c\}$\;
    \nl
    $\textsc{Sat}.\textsc{AddClauseAsCNF}(c)$\;
    \label{ln:ica-record-end}
}
\end{algorithm}

\paragraph{Incremental Conflict Analyser.}
\label{sec:ica}
The core component enabling incremental conflict reuse is the
\emph{Incremental Conflict Analyser} (ICA), shown in
Algorithm~\ref{alg:ica}.
The ICA is responsible for storing, retrieving, and applying learned conflict
clauses across related verification queries.

Each verification query $q$ is issued together with an \emph{inheritance set}
$\mathcal{I}$ of query identifiers whose learned conflicts may be reused.
The ICA maintains a global pool of conflict clauses indexed by query identifier,
along with a SAT solver instance populated with the conflicts active for the
current query.

At the start of each verification, \textsc{ICA.BeginQuery} is invoked
(Lines~\ref{ln:ica-begin}--\ref{ln:ica-init-clauses}) to reset the SAT solver and
load all conflict clauses associated with the identifiers in $\mathcal{I}$.
During branch-and-bound search, \textsc{ICA.Propagate} performs SAT-based reasoning
over the current subproblem by passing the current partial assignment over ReLU
phase variables as assumptions to the SAT solver
(Lines~\ref{ln:ica-prop-start}--\ref{ln:ica-prop-end}).
If the SAT instance is \textsf{UNSAT}, this result is reported back to the
verifier for pruning; otherwise, implied ReLU phase assignments are returned and
applied as additional bounds.
Finally, \textsc{ICA.RecordConflict} records newly discovered conflicts
(Lines~\ref{ln:ica-record-start}--\ref{ln:ica-record-end}), making them
available for reuse in subsequent refined queries.

\begin{algorithm}[t]
\caption{Branch-and-Bound Verification with \inc{Incremental Conflict Reuse}}
\label{alg:bab-inc}
\DontPrintSemicolon

\KwIn{
Verification query $q$ on network $f$;\\
\inc{query identifier $id$; inherited identifier set $\mathcal{I}$;\\
Incremental Conflict Analyser object \textsc{ICA}}.
}
\KwOut{$\textsf{SAT}$, $\textsf{UNSAT}$, or $\textsf{TIMEOUT}$.}

\BlankLine
\nl
\inc{\textsc{ICA}.\textsc{BeginQuery}$(\mathcal{I})$}\; \label{ln:begin-query}
\nl
$\pi \gets \emptyset$\tcp*{Current decision trail}
\nl
$Q \gets$ stack containing $\pi$\tcp*{DFS (or priority queue for best-first)}

\BlankLine
\nl
\While{$Q \neq \emptyset$}{
  \nl
  $\pi \gets Q.\textsc{Pop}()$\;
  \nl
  $(status,\; Bounds) \gets \textsc{Propagate}(q,\pi)$
  \tcp*{Standard bound propagation} \label{ln:numeric-prop}

  \nl
  \uIf{$status=\textsf{SAT}$}{
    \nl
    \Return{$\textsf{SAT}$}\tcp*{Counterexample found}
  }

  \nl
  \uIf{$status=\textsf{UNSAT}$}{
    \inc{\tcp{Record conflict}}
    \nl
    \inc{$c \gets \textsc{ExtractConflict}(\pi)$}\;
    \nl
    \inc{\textsc{ICA}.\textsc{RecordConflict}$(id, c)$}\;
    \label{ln:bab-record-conflict}
    \nl
    \textbf{continue}\;
  }

  \nl
  \Else(\tcp*[f]{\textsf{UNKNOWN}}){
    \inc{\tcp{Incremental conflict reasoning}}
    \nl
    $\inc{(icaStatus,\; Bounds) \gets
    \textsc{ICA}.\textsc{Propagate}(Bounds)}$\;
    \label{ln:sat-prop}

    \nl
    \uIf(\inc{\tcp*[f]{Violates inherited conflict}}){\inc{$icaStatus=\textsf{UNSAT}$}}{
      \nl
      \inc{\textbf{continue}}\;
    }
  }

  \nl
  $r \gets \textsc{ChooseSplit}(Bounds)$
  \label{ln:bab-split}
  \tcp*{Split on an undecided ReLU phase}

  \nl
  $Q.\textsc{Push}(\pi \cup \{r\})$\;
  \nl
  $Q.\textsc{Push}(\pi \cup \{\neg r\})$\;
}
\nl
\Return{$\textsf{UNSAT}$}
\end{algorithm}

\paragraph{Branch-and-Bound with Incremental Reuse.}
Algorithm~\ref{alg:bab-inc} presents a standard branch-and-bound verification
procedure augmented with incremental conflict reuse; incremental extensions are
highlighted in \inc{blue}.
At the start of verification, the Incremental Conflict Analyser is initialized
for the current query by invoking
$\textsc{ICA}.\textsc{BeginQuery}(\mathcal{I})$
(Line~\ref{ln:begin-query}), which activates the conflicts inherited from prior
queries.

The main search loop follows the standard branch-and-bound structure.
For each decision trail $\pi$, numeric bound propagation is applied first
(Line~\ref{ln:numeric-prop}).
If a concrete counterexample is found, the procedure terminates with
\textsf{SAT}.
If the subproblem is proven infeasible, a conflict is extracted from the current
trail and recorded via
$\textsc{ICA}.\textsc{RecordConflict}$, after which the branch is pruned.

If numeric propagation is inconclusive, the verifier invokes incremental conflict
reasoning by calling
$\textsc{ICA}.\textsc{Propagate}$ (Line~\ref{ln:sat-prop}).
This step checks the current partial assignment against inherited conflict
clauses.
If the SAT solver reports \textsf{UNSAT}, the node is immediately pruned;
otherwise, any implied ReLU phase assignments are applied as additional
propagation step. The solver then selects an undecided ReLU phase and 
branches using the existing branching heuristic.

The proposed incremental conflict reuse mechanism can be integrated
into branch-and-bound verification as a lightweight, sound extension that 
preserves the solver's core reasoning while reducing redundant exploration.

% ------------------------------------------------------------------
\section{Incremental Verification Use Cases}
\label{sec:verification-tasks}

In this section, we consider three representative use cases of incremental verification
and evaluate the effectiveness of the proposed conflict-reuse mechanism.
For each use case, we explain how its
structure gives rise to related queries, formally establish the refinement
relations, and evaluate the effectiveness of incremental conflict reuse
by comparing the performance of the incremental approach against a non-incremental baseline.
Before describing the use cases, we first summarize our implementation as well as the 
evaluation metrics considered in our experiments.

\subsection*{Implementation Details}
Our implementation integrates the Incremental Conflict Analyser into the
Marabou verifier~\cite{marabou,marabou2}, using the CaDiCaL SAT solver~\cite{cadical}
for conflict reasoning.
All verification queries induced by the same task share a single ICA instance.

\paragraph{Propagation from Inherited Conflicts.}
Inherited conflict clauses influence verification in two ways.
SAT-based reasoning may detect that the current partial assignment is infeasible,
allowing the verifier to prune the subproblem immediately, or it may derive
additional ReLU phase assignments via unit propagation.
These implied assignments introduce additional linear constraints that are
integrated into the verifier's existing numeric reasoning, further restricting
the feasible region.
To quantify the impact of conflict inheritance, we measure the total number of
such effects---both pruned subproblems and implied assignments---over the course of a
verification task.

\medskip
We now turn to the individual verification tasks considered in our evaluation.

\subsection{Use Case 1: Determining the Local Robustness Radius}
\label{sec:robustness-verification}

In this common use case, the goal is to identify the largest neighborhood around a
reference input within which a neural network's output remains consistent.
Given a reference input $x_0$ and a task-specific notion of output consistency
(e.g., preservation of the predicted class or bounded deviation of the output),
the goal is to determine tight lower and upper bounds on the maximal robustness
radius, up to a specified precision.

We formalize this task via \emph{local robustness verification queries} and the
associated \emph{local robustness radius}.

\paragraph{Local robustness verification queries given $\varepsilon$.}
In the classification setting considered here, a robustness query asks whether
the predicted class changes within a bounded perturbation region around a
reference input $x_0$.
Given a norm $\|\cdot\|_p$ and a radius $\varepsilon > 0$, the network is
\emph{not locally robust} at $x_0$ with respect to $\varepsilon$ if there exists 
an input $x$ within the
$\varepsilon$-ball around $x_0$ that violates the desired output property.
Formally, let $\mathcal{P}(x_0, x)$ denote a task-specific predicate capturing
misclassification, the network is not locally robust if
\[
\exists x \in \mathbb{R}^{n} \;\text{such that}\;
\|x - x_0\|_p \le \varepsilon
\;\wedge\;
\mathcal{P}(x_0, x).
\]

\begin{definition}[Local Robustness Radius]
Let $f : \mathbb{R}^n \rightarrow \mathbb{R}^m$ be a neural network and
$x_0 \in \mathbb{R}^n$ a reference input.
Let $\mathcal{P}(x_0, x)$ denote a task-specific predicate expressing output
inconsistency.
The \emph{local robustness radius} at $x_0$ is defined as
\[
\varepsilon^\star \;=\;
\inf \left\{
\varepsilon \ge 0 \;\middle|\;
\exists x \in \mathbb{R}^n,\;
\|x - x_0\| \le \varepsilon
\;\wedge\;
\mathcal{P}(x_0, x)
\right\}.
\]
\end{definition}

The task is to compute certified bounds
$\underline{\varepsilon} \le \varepsilon^\star \le \overline{\varepsilon}$
such that
$\overline{\varepsilon} - \underline{\varepsilon} \le \delta$ for a given
precision $\delta > 0$.
The lower bound corresponds to a formally verified robustness guarantee
(\textsf{UNSAT}), while the upper bound corresponds to a radius at which a
violating input is found (\textsf{SAT}).

Local robustness radius computation is a standard benchmark in neural network
verification and provides a quantitative measure of model stability around a
given input.

A standard approach to this task repeatedly invokes a neural network verifier
over a sequence of verification queries, typically using a binary-search style
procedure over candidate perturbation radii.
Each query examines a radius $\varepsilon_i$ within a prescribed interval:
an \textsf{UNSAT} result certifies robustness at $\varepsilon_i$ and yields a new
lower bound, while a \textsf{SAT} result produces a violating input and yields a
new upper bound.
The next candidate radius is selected accordingly, and the process continues
until the remaining interval is within the desired precision or the
computational budget is exhausted.
A precise procedural description is given in
Appendix~\ref{app:robustness-radius}.

This procedure induces a sequence of verification queries
$q_1, q_2, \ldots, q_k$, all defined over the same network $f$ and reference input
$x_0$.
Each query $q_i$ corresponds to a perturbation radius $\varepsilon_i \ge 0$ and
checks the output predicate $\mathcal{P}(x_0,x)$ for all inputs satisfying
$\|x - x_0\| \le \varepsilon_i$.
Across the sequence, the perturbation radius is the only varying component and
determines the input constraint of the $i$-th query.

\begin{proposition}[Robustness Query Refinement]
\label{prop:robustness-query-refinement}
Let $q_i$ and $q_j$ be two robustness verification queries with perturbation
radii $\varepsilon_i > \varepsilon_j$, respectively.
Then $q_j$ is a refinement of $q_i$, denoted $q_j \preceq q_i$.
\end{proposition}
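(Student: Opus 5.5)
The plan is to verify the two inclusions required by Definition~\ref{def:refinement} directly. First I will make the queries explicit in the form $(\mathcal{X},\mathcal{Y})$. For each radius $\varepsilon$, the robustness query is $q_\varepsilon = (\mathcal{X}_\varepsilon, \mathcal{Y})$, where
\[
\mathcal{X}_\varepsilon = \{x \in \mathbb{R}^n \mid \|x - x_0\|_p \le \varepsilon\}.
\]
The output region $\mathcal{Y}$ is the set of outputs satisfying the misclassification predicate $\mathcal{P}(x_0,\cdot)$. Because $x_0$ and the predicate are fixed across the sequence, $\mathcal{Y}$ does not depend on $\varepsilon$. Both $q_i$ and $q_j$ are also defined over the same network $f$, which gives the shared-network condition at once.

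\textbf{Output inclusion.} Since $\mathcal{Y}(q_j) = \mathcal{Y}(q_i)$, the inclusion $\mathcal{Y}(q_j) \subseteq \mathcal{Y}(q_i)$ holds trivially. Here I need a little care: $\mathcal{P}(x_0,x)$ is written as a predicate on the input, but in the setting of the paper it depends on $x$ only through $f(x)$. For example, it may state that some logit exceeds the logit of the reference class $\arg\max f(x_0)$. I will state this reading explicitly so that $\mathcal{Y}$ is a well-defined subset of $\mathbb{R}^{d^{(L)}}$ that is independent of the radius.

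\textbf{Input inclusion.} Take any $x \in \mathcal{X}(q_j)$. Then $\|x - x_0\|_p \le \varepsilon_j < \varepsilon_i$, so $x \in \mathcal{X}(q_i)$. Hence the $\varepsilon_j$-ball is contained in the $\varepsilon_i$-ball; the argument uses only transitivity of $\le$ and works for any norm. Together with the output inclusion, this gives $q_j \preceq q_i$.

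I expect no real obstacle; the only delicate point is the modeling remark above, namely making precise that the predicate induces a fixed output set. In fact the argument only needs $\varepsilon_i \ge \varepsilon_j$, so the strict inequality is stronger than necessary.
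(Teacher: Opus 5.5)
Your proof is correct and follows the paper's own argument: the same network is shared, the $\varepsilon_j$-ball is contained in the $\varepsilon_i$-ball so $\mathcal{X}(q_j) \subseteq \mathcal{X}(q_i)$, and the output constraints coincide because both queries use the same predicate $\mathcal{P}(x_0,\cdot)$. Two points the paper leaves implicit are useful additions: for $\mathcal{Y}$ to be a well-defined output set, $\mathcal{P}$ must depend on $x$ only through $f(x)$, and the weaker condition $\varepsilon_i \ge \varepsilon_j$ already suffices.
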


\noindent
Proposition~\ref{prop:robustness-query-refinement} shows that robustness radius
determination induces a refinement-ordered family of verification queries.
The proof is provided in Appendix~\ref{app:usecase-refinement}.

In the incremental robustness radius determination procedure, each verification
query $q_i$ inherits conflicts learned from all previously issued queries
$q_j$ with larger perturbation radii $\varepsilon_j > \varepsilon_i$.
This allows later queries, which impose stricter input constraints, to reuse
conflicts learned under looser constraints and prune infeasible regions early.
Since the binary search explores radii in a non-monotonic order, conflicts are 
selectively inherited only from queries with $\varepsilon_j > \varepsilon_i$, which 
are guaranteed to be valid refinements by Proposition~\ref{prop:robustness-query-refinement}. 

\subsubsection{Evaluation}

We evaluated the effectiveness of incremental conflict reuse for determining
the local robustness radius.
We compared our incremental approach against a non-incremental baseline on the
MNIST dataset, using a fully connected neural network from the VNN-COMP
benchmark~\cite{vnncomp1,vnncomp21}\footnote{\url{https://github.com/VNN-COMP/vnncomp2021/blob/main/benchmarks/mnistfc/mnist-net_256x2.onnx}}.
For each experiment, we computed the local robustness radius for
$1000$ inputs from the MNIST test set using a precision parameter
$\delta = 0.001$.

Table~\ref{tab:robustness-radius-results} summarizes the results.
The \emph{Time} column reports the average total solving time 
per task (in seconds), \emph{Solved} denotes the number of inputs for which the robustness
radius was determined within the timeout, \emph{Propagations} reports the
average number of propagations induced by inherited conflicts per task, and
\emph{Conflicts} reports the average number of conflicts recorded per
task.
Overall, incremental verification with conflict reuse significantly outperforms
the baseline.
On average, the incremental method achieves a $1.3\times$ speedup in per-task
solving time, substantially reducing the total time required to determine
robustness radii across test inputs.
\begin{table}
\centering
\begin{tabular}{lcccc}
\toprule
\;\textbf{Method}\; &
\;\textbf{Time (s)}\;  &
\;\textbf{Solved}\;  &
\;\textbf{Propagations}\;  &
\;\textbf{Conflicts} \; \\
\midrule
Non-incremental &
315.6 &
160 &
-- &
-- \\
Incremental &
\bf{233.5} &
\bf{185} &
8.2 &
107.4\\
\midrule
\textbf{Speedup} &
\textbf{1.35$\times$} &
-- &
-- &
-- \\
\bottomrule
\end{tabular}
\caption{Robustness radius evaluation on MNIST.}
\label{tab:robustness-radius-results}
\end{table}

\begin{wrapfigure}{r}{0.48\textwidth}
\centering
\vspace{-0.5em}
\includegraphics[width=0.95\linewidth]{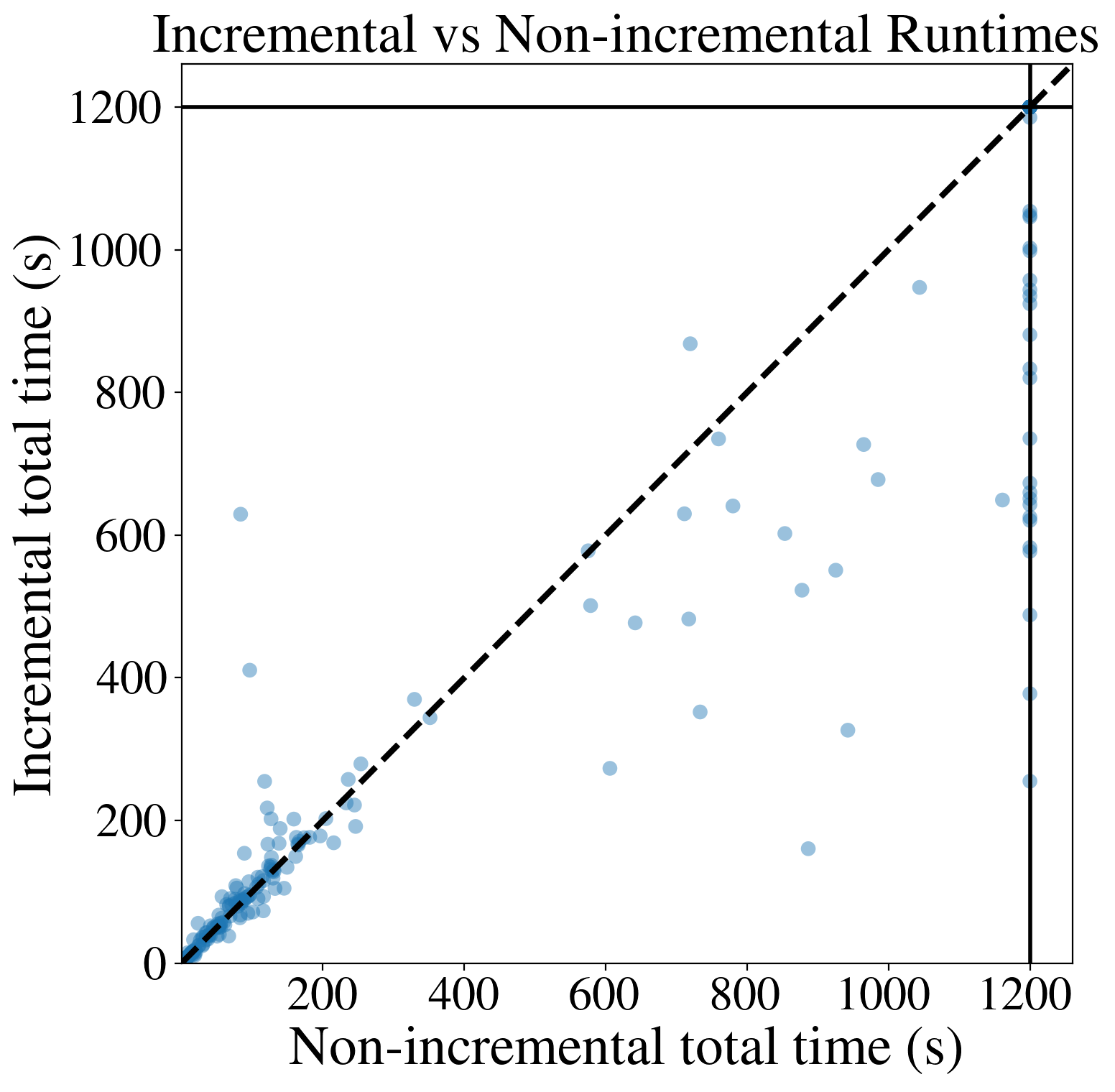}
\caption{Incremental vs.\ non-incremental for robustness radius determination.}
\label{fig:robustness-results}
\vspace{-5em}
\end{wrapfigure}
Figure~\ref{fig:robustness-results} presents a visual comparison of runtimes for
incremental and non-incremental robustness radius determination.
Each point corresponds to a single input instance.
Most points lie below the $y=x$ diagonal, with several non-incremental timeouts,
indicating better performance of the incremental approach on harder instances.

\subsection{Use Case 2: Verification with Input Splitting}
\label{sec:input-splitting}
Input splitting is a common approach for verifying challenging properties, often
referred to as split-and-conquer or divide-and-conquer verification.
Under input splitting, the input region is partitioned into multiple subregions,
and each subregion is verified independently.
As shown by Wu et al.~\cite{input_splitting}, if the verifier is sound and complete
and the subregions collectively cover the original input region, the overall
verification result can be determined compositionally.
That is, the property is \textsf{SAT} if at least one subregion is proven
\textsf{SAT}, and \textsf{UNSAT} if all subregions are proven \textsf{UNSAT}.

Input splitting is particularly important for scaling verification to difficult
properties where the full input region is too large to verify at once.
By recursively partitioning challenging regions into smaller subregions,
verifiers can verify properties that would otherwise time out.

With input splitting, each child query produced by splitting is a refinement of
the parent query that was split.
This follows because the input region of each child query is strictly contained
within the input region of the parent query.
The following proposition formalizes this observation.

\begin{proposition}[Input Splits are Refinements]
\label{prop:input-splitting-refinement}
Let $q_i$ be a verification query, and suppose
$q_{i+1}, \ldots, q_{i+n}$ are generated by input splitting of $q_i$.
Then each $q_{i+1}, \ldots, q_{i+n}$ is a refinement of $q_i$.
\end{proposition}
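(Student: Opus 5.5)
The plan is to check the two conditions of Definition~\ref{def:refinement} directly for each child query, after fixing what ``generated by input splitting'' means. Following the description of Wu et al.~\cite{input_splitting}, input splitting of $q_i=(\mathcal{X}_i,\mathcal{Y}_i)$ produces child queries $q_{i+m}=(\mathcal{X}_{i+m},\mathcal{Y}_{i+m})$ for $m=1,\ldots,n$. Each child is obtained by conjoining additional linear input constraints to $q_i$; for example, bisecting a box along one coordinate adds $x_k\le s$ or $x_k\ge s$. The network $f$ and the output specification are left unchanged, so $\mathcal{Y}_{i+m}=\mathcal{Y}_i$. The subregions satisfy $\mathcal{X}_{i+m}\subseteq\mathcal{X}_i$ and $\bigcup_{m}\mathcal{X}_{i+m}\supseteq\mathcal{X}_i$. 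I will state this formalization explicitly at the start of the proof, since the proposition depends on it.

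Given that formalization, the argument proceeds in three steps for an arbitrary $m\in\{1,\ldots,n\}$.
\begin{enumerate}
\item Both queries are over the same network $f$, because splitting modifies only the input constraints.
\item $\mathcal{X}(q_{i+m})\subseteq\mathcal{X}(q_i)$. If the child is defined as $\mathcal{X}_i$ intersected with extra constraints, this follows because any point satisfying the conjunction of the original and added constraints satisfies the original ones.
\item $\mathcal{Y}(q_{i+m})=\mathcal{Y}(q_i)$, so in particular $\mathcal{Y}(q_{i+m})\subseteq\mathcal{Y}(q_i)$.
\end{enumerate}
Together these give $q_{i+m}\preceq q_i$ by Definition~\ref{def:refinement}. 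The covering property is not needed here; it matters only for the compositional correctness of the overall verdict.

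The main obstacle is definitional rather than technical: the argument only goes through if every split scheme produces children contained in the parent's input region. Some implementations produce subregions given directly as new boxes, for example by recomputing bounds or padding for numerical reasons, so containment is not automatic. I would handle this by requiring, as part of the notion of input splitting, that children be defined as $\mathcal{X}_i\cap R_m$ for some region $R_m$, where the regions $R_m$ cover the input space. Containment then holds by construction.

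Finally, I would note that the refinement relation is transitive, which follows immediately from the transitivity of $\subseteq$. Hence in recursive splitting every descendant query is also a refinement of every ancestor. By Theorem~\ref{thm:conflict-reuse}, this justifies inheriting conflicts along the entire ancestor chain, not just from the immediate parent.
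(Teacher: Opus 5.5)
Your proof is correct and follows the same route as the paper's: the network is unchanged, $\mathcal{X}(q_{i+m})\subseteq\mathcal{X}(q_i)$ because splitting only restricts the input region, and $\mathcal{Y}(q_{i+m})=\mathcal{Y}(q_i)$, so Definition~\ref{def:refinement} gives the refinement. You also formalize splitting explicitly as intersection with regions $R_m$ and note transitivity, which justifies inheriting conflicts from all ancestors as Algorithm~\ref{alg:input-split-inc} does; these go slightly beyond the paper's proof without changing the argument.
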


\noindent
Proposition~\ref{prop:input-splitting-refinement} shows that input splitting
induces refinement chains, enabling incremental verification across recursive
partitioning.
The proof is provided in Appendix~\ref{app:usecase-refinement}.

The input-splitting incremental verification procedure leverages
Proposition~\ref{prop:input-splitting-refinement} to enable incremental verification across
a recursive partitioning of the input space.
Given a verification query $q$ with input region $\mathcal{X}$, the verifier
first attempts to verify $q$ directly.
If verification terminates with a conclusive result (\textsf{SAT} or
\textsf{UNSAT}), that result is returned.
If verification times out, the input region $\mathcal{X}$ is partitioned by
splitting the largest interval dimension at its midpoint, producing two child
queries $q_{\text{left}}$ and $q_{\text{right}}$.

By Proposition~\ref{prop:input-splitting-refinement}, both child queries are
refinements of the parent query $q$.
Therefore, all conflict clauses learned during the attempted verification of
$q$ are retained and reused when verifying $q_{\text{left}}$ and
$q_{\text{right}}$.
Each child query is then verified recursively following the same procedure: if a
child query times out, it is further split and inherits all conflict clauses
from its ancestors.
As input splitting progresses, the set of learned conflict clauses therefore
monotonically increases.

The procedure continues until all leaf subregions are resolved or a specified
time limit is reached.
If all leaf subregions verify as \textsf{UNSAT}, the original property is
\textsf{UNSAT}.
Dually, if any leaf subregion verifies as \textsf{SAT}, the original property is
\textsf{SAT}.
The pseudocode for the input-splitting workflow is provided in
Appendix~\ref{app:input-split}.

\subsubsection{Evaluation}

To evaluate the effectiveness of incremental conflict reuse for input splitting,
we applied our approach to a counterexample-guided inductive synthesis (CEGIS)
loop for training Lyapunov neural certificates for deep reinforcement
learning-controlled spacecraft, as introduced by Mandal et al.~\cite{lyapunov}.
This framework iteratively learns a Lyapunov function that certifies
reach-while-avoid properties for a 4D spacecraft docking system.
The CEGIS loop alternates between training the certificate to satisfy Lyapunov
conditions and formally verifying these conditions using neural network
verification.
When verification identifies counterexamples that violate the Lyapunov
conditions, the counterexamples are added to the training data and the
certificate is retrained.

We executed the complete CEGIS training procedure and extracted the $680$
verification queries generated during the process.
Of these, $189$ queries were solved without requiring input splitting and were
excluded from the evaluation, as conflict reuse only affects verification once
input splitting is performed.
The remaining $491$ queries form the basis of our evaluation.
For these queries, we employed a progressive timeout strategy: the initial
verification attempt was run with a $5$-second timeout, and each subsequent input
split increased the timeout by a factor of $1.5$ (i.e., $7.5$ seconds after the
first split, $11.25$ seconds after the second, and so on).
A global timeout of $1200$ seconds was imposed to prevent indefinite splitting.

Table~\ref{tab:input-split-results} summarizes the performance of the incremental
and non-incremental methods.
The \emph{Time} column reports the average verification time per
verification task in
seconds.
\emph{Solved} denotes the number of verification tasks completed within the
global timeout.
\emph{Propagations} reports the average number of propagations induced per
verification task, and \emph{Conflicts} reports the average number of conflict
clauses recorded during verification.

\begin{table}[h]
\centering
\begin{tabular}{lcccc}
\toprule
\;\textbf{Method}\;  &
\;\textbf{Time (s)}\;  &
\;\textbf{Solved}\;  &
\;\textbf{Propagations}\;  &
\;\textbf{Conflicts} \; \\
\midrule
Non-incremental &
84.1 &
489 &
-- &
-- \\
Incremental &
\textbf{43.9} &
\textbf{491} &
1.7 &
7.9\\
\midrule
\textbf{Speedup} &
\textbf{1.92$\times$} &
-- &
-- &
-- \\
\bottomrule
\end{tabular}
\vspace{0.5em}
\caption{Evaluation of iterative input splitting for Lyapunov certificate verification.}
\label{tab:input-split-results}
\end{table}

Overall, the incremental method significantly outperforms the non-incremental
baseline.
It successfully solves all $491$ verification tasks, whereas the baseline times
out on two tasks, and achieves an average speedup of $1.92\times$.
On average, each verification attempt involved $7.9$ recorded conflict clauses,
which induced $1.7$ propagations per attempt.

\begin{wrapfigure}{r}{0.45\textwidth}
\vspace{-2.3em}
\centering
\includegraphics[width=0.95\linewidth]{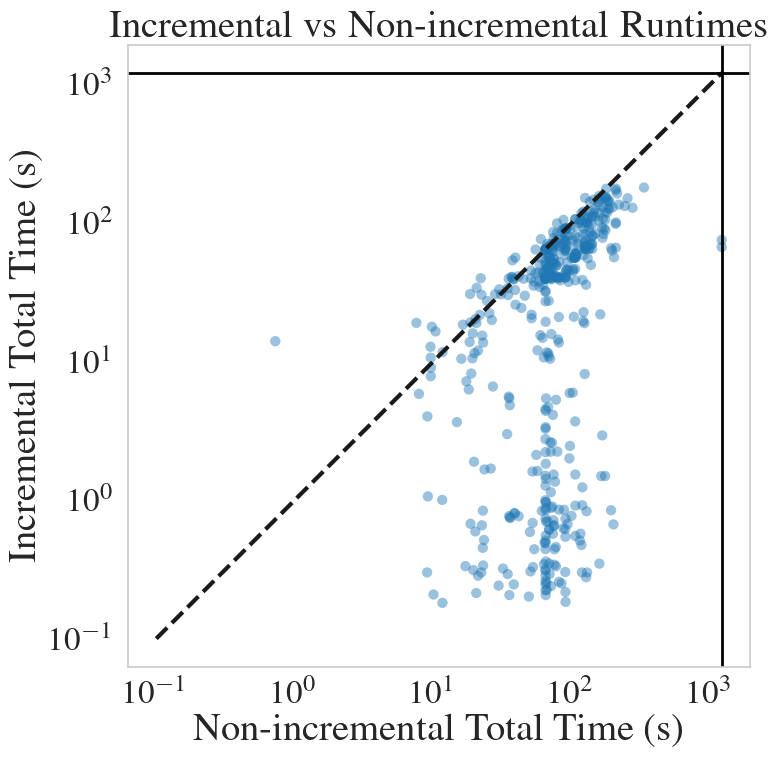}
\caption{
Incremental vs.\ non-incremental for input splitting.
%The dashed line indicates equal performance, and the solid lines represent the
%timeout threshold (1200 seconds).
}
\label{fig:inputsplit-plot}
\vspace{-4.5em}
\end{wrapfigure}
Figure~\ref{fig:inputsplit-plot} presents a visual comparison of runtimes for both
methods on a logarithmic scale.
Each point corresponds to a single verification query.
Most points lie below the 
equal-performance line, indicating that incremental
verification outperforms the non-incremental baseline on the majority of queries.

\subsection{Use Case 3: Minimal Sufficient Feature Set Extraction}
\label{sec:msfs-extraction}
Minimal sufficient feature set extraction addresses the problem of explaining a
neural network's prediction by identifying a smallest subset of input features
whose values alone suffice to determine the
output~\cite{VeriX,verix_plus,towards_xai,joao,BoElDuBaKa26}.

Given a reference input and its predicted class, the goal is to identify a subset
of input indices such that fixing these features to their reference values
guarantees preservation of the predicted class, regardless of how the remaining
features vary.
While this notion naturally extends to regression settings via a tolerance on the
output, we focus here on classification for clarity.

Formally, we seek a feature subset that is sufficient for preserving the
prediction and minimal with respect to set inclusion, subject to a given time
budget.
In this work, unfixed features are allowed to vary freely over their entire
domain.

\begin{definition}[Minimal Sufficient Feature Set]
Let $f : \mathbb{R}^n \rightarrow \mathbb{R}^m$ be a neural network,
$x_0 \in \mathbb{R}^n$ a reference input, and
$c = \arg\max_{j} f_j(x_0)$ the predicted class at $x_0$.
For a subset of feature indices $S \subseteq \{1,\dots,n\}$, define
\[
\mathcal{X}_S \;=\;
\left\{
x \in \mathbb{R}^n \;\middle|\;
x_i = (x_0)_i \;\text{for all}\; i \in S
\right\}.
\]
The set $S$ is a \emph{sufficient feature set} if
\[
\forall x \in \mathcal{X}_S,\quad
\arg\max_j f_j(x) = c.
\]
It is \emph{minimal} if no strict subset $S' \subset S$ is sufficient.
\end{definition}

Minimal sufficient feature sets reveal which input features are essential for a
given prediction and form a central primitive in formal explainability.

A common approach to solving this task is to search over subsets of input
features while invoking a neural network verifier as a subroutine.
The procedure begins with all features fixed to their values in the reference
input
\begin{wrapfigure}{r}{0.42\textwidth}
 \vspace{0.5ex}
 \centering
 \includegraphics[width=\linewidth]{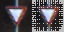}
 \vspace{0.5ex}
 \includegraphics[width=\linewidth]{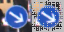}
\caption{Minimal sufficient feature set visualizations for GTSRB inputs.
Pixels not included in the explanation are shown in gray.}
 \label{fig:gtsrb-expl-visual}
 \vspace{-5.5ex}
\end{wrapfigure}
 and progressively frees features to test whether preservation of the
predicted class is maintained.
At a high level, the procedure alternates between proposing candidate feature
sets to free and using verification outcomes to determine whether these features
can be freed or must remain fixed.
This process follows an \emph{anytime} paradigm: if interrupted due to a timeout,
it returns the smallest sufficient feature set identified so far.

To explore the space of feature subsets efficiently, an importance ordering over
features is typically computed in a preprocessing step and used to guide a
binary-search-style exploration.
Rather than freeing features individually, the procedure considers groups of
features---often starting from the least important ones---and recursively refines
them.
Candidate sets are tentatively freed and checked for sufficiency; depending on
the outcome, features are either permanently freed or selectively reinstated.

A precise procedural description of this workflow, including the technical
modifications required to support incremental verification and the binary-search
strategies employed, is provided in Appendix~\ref{app:explainability}.

Under this formulation, the task gives rise to a collection of verification
queries, all defined over the same neural network $f$ and reference input $x_0$.
Each query $q$ is parameterized by a candidate set of freed features
$\overline{S}$, inducing the fixed feature set
$S = \{1,\dots,n\} \setminus \overline{S}$ and the corresponding constrained input
set $\mathcal{X}_S$.
The query asks whether fixing all features in $S$ to their values in $x_0$ is
sufficient to preserve the predicted class $c$.
A query returning \textsf{UNSAT} certifies that $S$ is sufficient and that all
features in $\overline{S}$ can be freed.
In contrast, a \textsf{SAT} result indicates that $S$ is insufficient and that
additional features must be fixed.
For soundness, \textsf{TIMEOUT} outcomes are treated as \textsf{SAT}.

All queries share the same network $f$, reference input $x_0$, and target class
$c$; the only varying component is the set of fixed features.
As a result, minimal sufficient feature set extraction induces a family of
closely related verification queries whose input constraints differ only in which
features are fixed.

Due to the binary-search-style exploration, the verification queries generated
during minimal sufficient feature set extraction are naturally organized as a
search tree rather than a linear sequence.
Queries generated after a \textsf{SAT} or \textsf{TIMEOUT} outcome impose strictly
stronger input constraints by fixing additional features.
In contrast, queries generated after a \textsf{UNSAT} outcome test disjoint sets
of features and are therefore incomparable under refinement.
Consequently, the overall query family is not totally ordered by refinement.

We formalize the refinement relationship that arises along \textsf{SAT} and
\textsf{TIMEOUT} branches of the search tree below.

\begin{proposition}[Feature-Set Query Refinement]
\label{prop:feature-set-query-refinement}
Let $q$ and $q'$ be two verification queries generated during the
feature set extraction procedure, corresponding to fixed feature sets
$S$ and $S'$, respectively.
If $q'$ is generated from $q$ following a \textsf{SAT} or \textsf{TIMEOUT}
outcome, then $S \subset S'$ and $q'$ is a refinement of $q$, denoted
$q' \preceq q$.
\end{proposition}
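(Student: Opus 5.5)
The argument has two parts. First we show that the procedure only ever adds fixed features along a \textsf{SAT} or \textsf{TIMEOUT} edge, so $S \subset S'$. Then we show that inclusion of fixed sets reverses into inclusion of input regions, which together with an unchanged output region gives $q' \preceq q$ under Definition~\ref{def:refinement}.

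For the first part we appeal to the structure of the extraction procedure (Appendix~\ref{app:explainability}). When a query $q$ with fixed set $S$ (freed set $\overline{S}$) returns \textsf{SAT} or \textsf{TIMEOUT}, the tentative freeing of $\overline{S}$ is rejected. The next query $q'$ reinstates a nonempty subset $R \subseteq \overline{S}$ of the tentatively freed features, for instance one half of the current candidate group in the binary-search split. Its fixed set is therefore $S' = S \cup R$. Because $R \neq \emptyset$ and $R \cap S = \emptyset$, we get $S \subset S'$.

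This is the step I expect to be the main obstacle. It is not mathematical but procedural: we must check, against the actual pseudocode, that every successor generated after a \textsf{SAT} or \textsf{TIMEOUT} outcome keeps all of the previous fixed features fixed and adds at least one more. If some implementation step frees other features at the same time, the claim fails. In that case the statement should be read as restricted to the successor relation exactly as defined there.

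For the second part, let $x \in \mathcal{X}_{S'}$. Then $x_i = (x_0)_i$ for every $i \in S'$, and hence for every $i \in S \subseteq S'$, so $x \in \mathcal{X}_S$. This gives $\mathcal{X}(q') = \mathcal{X}_{S'} \subseteq \mathcal{X}_S = \mathcal{X}(q)$. Both queries use the same network $f$, reference input $x_0$ and target class $c$, so they share the undesired output region $\mathcal{Y} = \{y \mid \arg\max_j y_j \neq c\}$. Thus $\mathcal{Y}(q') = \mathcal{Y}(q)$, and both conditions of Definition~\ref{def:refinement} hold, so $q' \preceq q$.

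Finally, the refinement relation is transitive, since it is defined by set inclusion. Applying the argument along a path of \textsf{SAT}/\textsf{TIMEOUT} edges in the query search tree therefore shows that every descendant reached only through such edges is a refinement of its ancestor. This is what justifies inheriting conflicts along those branches via Theorem~\ref{thm:conflict-reuse}.
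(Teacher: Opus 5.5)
Your proposal is correct and follows the paper's proof essentially step for step: the paper also obtains $S \subset S'$ ``by construction'' (the successor reinstates previously freed features), deduces $\mathcal{X}_{S'} \subseteq \mathcal{X}_S$, and notes that the output constraint (preservation of $c$) is unchanged, so Definition~\ref{def:refinement} applies. The paper asserts the procedural step you flag without checking it; Algorithm~\ref{alg:binary-sequential-inc} does rule out your worry, because inside the recursive call spawned by $q$ the set $S_{\text{freed}}$ grows only by subsets of the current candidate set, which gives $S \subseteq S'$ for every query that inherits from $q$ (equality can occur, e.g.\ when a singleton right half is re-verified, so $R \neq \emptyset$ can fail, but non-strict inclusion is all Definition~\ref{def:refinement} needs).
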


\noindent
Proposition~\ref{prop:feature-set-query-refinement} shows that, although the full
set of queries is not totally ordered, all queries along \textsf{SAT} and
\textsf{TIMEOUT} branches form refinement chains.
The proof is provided in Appendix~\ref{app:usecase-refinement}.

Accordingly, in the incremental minimal sufficient feature set extraction
procedure, each query $q'$ generated after a \textsf{SAT} or \textsf{TIMEOUT}
outcome inherits all conflicts learned from its ancestor queries along the same
search-tree branch.

\subsubsection{Evaluation}
We evaluated the effectiveness of incremental conflict reuse for accelerating
minimal sufficient feature set extraction on the German Traffic Sign Recognition
Benchmark (GTSRB) dataset~\cite{gtsrb}, using a convolutional neural network model
from Wu et al.~\cite{verix_plus}.
Out of the $1000$ test inputs considered, $70$ triggered conflicts during
\textsf{SAT} or \textsf{TIMEOUT} verification outcomes; the remaining inputs were
resolved without conflicts and therefore could not benefit from conflict reuse.
We restrict our evaluation to these $70$ cases.

For each input, we ran an \emph{anytime} minimal sufficient feature set extraction
procedure and compared our incremental approach against a non-incremental
baseline.
Table~\ref{tab:feature-set-extraction-results} summarizes the results.
The \emph{Explanation Size} column reports the average size of the minimal
sufficient feature set returned within the time budget.
\emph{Propagations} reports the average number of propagations induced by
inherited conflicts per task, and \emph{Conflicts} reports the average number of
conflict clauses recorded per task.
Both approaches achieve comparable explanation sizes, with the incremental method
yielding slightly smaller explanations on average.
On average, the incremental approach records $92$ conflicts per task, which
induce approximately $2$ effective propagations per task.

\begin{table}[h]
\centering
\begin{tabular}{lcccc}
\toprule
\;\textbf{Method}\; &
\;\textbf{Explanation Size}\; &
\;\textbf{Propagations}\; &
\;\textbf{Conflicts} \;\\
\midrule
Non-incremental &
848.52 &
-- &
-- \\
Incremental &
\bf{844.21} &
2.30 &
92.14 \\
\bottomrule
\end{tabular}
\caption{Minimal sufficient feature set extraction on GTSRB.}
\label{tab:feature-set-extraction-results}
\end{table}
\vspace{-1.5em}

The primary benefit of incremental verification in this setting lies in its
\emph{anytime} behavior.
As shown in Figure~\ref{fig:feature-set-extraction-results}, incremental
verification progressively outperforms the non-incremental baseline in reducing
explanation size over time.
During the initial phase (up to approximately 20 seconds), both methods exhibit
similar performance, with the non-incremental approach sometimes slightly ahead,
reflecting the overhead incurred while recording conflicts early in the search.

\begin{wrapfigure}{r}{0.55\textwidth}
\vspace{-0.5em}
\centering
\includegraphics[width=0.95\linewidth]{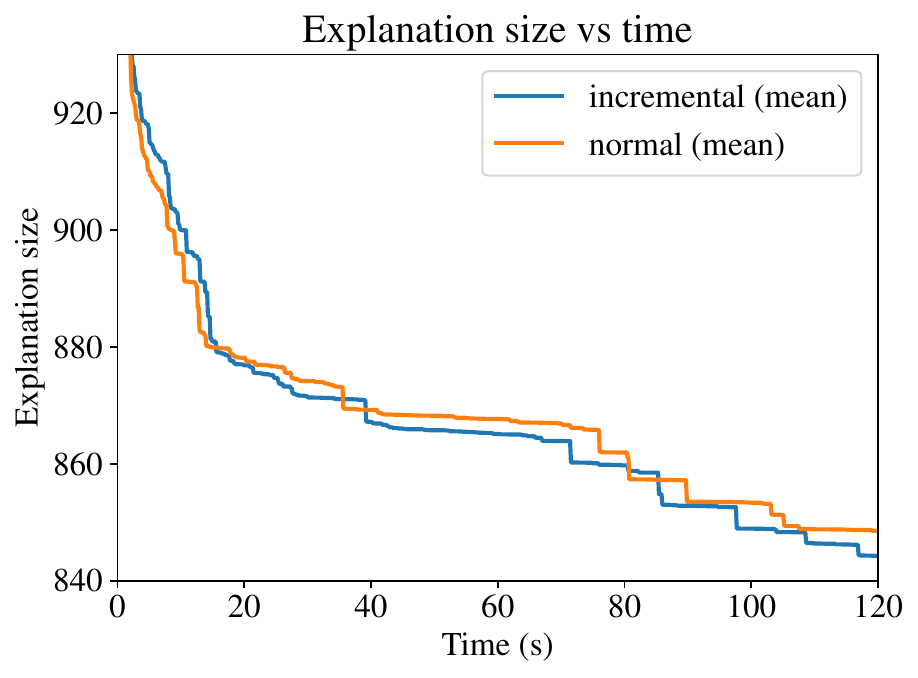}
\caption{
Incremental vs.\ non-incremental explanation size over time.
}
\label{fig:feature-set-extraction-results}
\vspace{-2em}
\end{wrapfigure}

Beyond this point, incremental verification tends to achieve smaller
explanations more quickly by reusing previously learned conflicts accumulated
earlier in the search.
This reuse supports earlier identification of critical features and more
efficient pruning of infeasible feature combinations, leading to improved
anytime behavior.
Overall, these observations suggest that conflict reuse can be beneficial for
explainability tasks under an anytime verification regime.

\subsection{Results Discussion}

We evaluated incremental conflict reuse on three verification tasks: robustness
radius determination, input splitting, and minimal sufficient feature set
extraction.

In robustness radius determination, queries form a refinement chain,
allowing conflicts learned at larger radii to be reused in subsequent queries.
This yields a 26\% reduction in average verification time compared to the
non-incremental baseline (from 315.6 to 233.5 seconds), corresponding to a
$1.35\times$ speedup.
In input splitting, queries form refinement chains along recursive split
branches, allowing conflicts learned in parent subproblems to be reused
consistently in descendant queries.
This yields a 47\% reduction in average verification time compared to the
non-incremental baseline (from 84.1 to 43.9 seconds), corresponding to a
$1.92\times$ speedup.
For minimal sufficient feature set extraction, conflict reuse occurs along
refinement chains induced by \textsf{SAT} and \textsf{TIMEOUT} outcomes.
While final explanation size improvement is small, incremental reuse improves anytime behavior
by reducing explanation size more quickly.

Overall, the impact of incremental conflict reuse is closely tied to the
refinement structure of the query family: stronger refinement relations enable
greater reuse and larger performance gains.

\section{Related Work}
\label{sec:related-work}

Incremental SAT and SMT solving address scalability by reusing learned
information, such as conflict clauses and theory lemmas, across sequences of
related problem instances~\cite{incremental_sat_1,smt_handbook,inc_smt}.
When successive instances are structurally similar, this can avoid redundant
reasoning and improve performance.
But, the effectiveness of incremental techniques is strongly
problem-dependent, and worst-case results in areas such as dynamic graph
algorithms show that substantial recomputation may be unavoidable under general
updates~\cite{HoDeTh01}.
Despite the maturity of incremental SAT/SMT solving, its systematic application
to neural network verification remains limited.

Prior work on incremental neural network verification has largely focused on settings in
which the network itself changes across queries.
Residual reasoning~\cite{residual_reasoning} reuses information learned for
abstract networks to accelerate verification after refinement, while IVAN and
I-IVAN~\cite{ivan_2,ivan} reuse successful case splits heuristically across
related network architectures.
More recently, Zhang et al.~\cite{inc_counterexample} study incremental
verification guided by counterexample potentiality, again in scenarios where the
network is modified.

Related ideas have been explored in abstract interpretation.
FANC~\cite{proof_transfer} employs heuristic transfer of abstract bounds to certify
multiple approximate neural networks, whereas we reuse conflict information
across verification queries on a fixed network.

In contrast, we consider iterative verification over a fixed neural network,
where properties or input constraints vary across queries.
Our approach reuses learned conflict clauses and provides conditions under which
they can be transferred soundly between related verification queries, aligning
with conflict-based incremental SAT/SMT solving applied to a single
network.

In our evaluation, we instantiated our approach with the CaDiCaL
SAT solver~\cite{cadical} and the Marabou verifier~\cite{marabou,marabou2}
as backends. CaDiCaL is a modern SAT
solver with a clean design that facilitates its integration with
other tools. Marabou is a proof-producing DNN analysis
framework~\cite{IsBaZhKa22}, which has been used for a myriad of
applications, including network pruning~\cite{LaKa21}, formal
explainability~\cite{BaAmCoReKa23}, verifying network
generalization~\cite{AmMaZeKaSc23}, and network
ensembles~\cite{AmZeKaSc22}. 
While our implementation relies on Marabou and CaDiCaL, the proposed
technique is solver-agnostic and can, in principle, be integrated with
other SAT solvers and branch-and-bound verifiers.

\section{Limitations and Future Work}
\label{sec:future-work}

The current implementation records conflicts without requiring minimality.
As a result, a conflict may include ReLU phase decisions that are not strictly
necessary to establish infeasibility.
While smaller or subsumed conflicts could improve reuse effectiveness and reduce
the overhead of SAT-based reasoning, computing minimal conflicts would require
additional analysis and is left for future work~\cite{minimization2,crown_min2_BICCOS,crown_min}.

More generally, the approach focuses on reusing conflicts derived from infeasible
subproblems.
Other reusable information, such as richer theory-specific lemmas or
abstractions, may further improve performance in some settings.
Exploring such mechanisms would require careful consideration of both soundness
and solver integration.

Finally, learned conflicts are used solely for pruning and propagation.
An additional direction is to exploit conflicts to guide branching decisions,
for example by prioritizing frequently occurring case splits.

\section{Conclusion}
\label{sec:conclusion}

We introduced an incremental verification approach for neural networks that
reuses learned conflict clauses across related verification queries over a fixed
network.
By formalizing a refinement relation between queries, we showed when such
conflicts can be reused soundly and integrated this mechanism into the Marabou
verifier as a lightweight extension to branch-and-bound search.
Our evaluation shows that conflict reuse reduces redundant exploration and yields
speedups of up to $1.9\times$ on representative iterative verification tasks,
while preserving soundness.

\section*{Acknowledgments}

Raya Elsaleh is supported by the Ariane de Rothschild Women Doctoral Program.
The work of Elsaleh and Katz was partially funded by a grant from the
Israeli Science Foundation (grant number 558/24), and by the European
Union (ERC, VeriDeL, 101112713). Views and opinions expressed are
however those of the author(s) only and do not necessarily reflect
those of the European Union or the European Research Council Executive
Agency. Neither the European Union nor the granting authority can be
held responsible for them.

\bibliographystyle{abbrv}
\bibliography{references}

\newpage
\appendix
\section*{Appendix}
\addcontentsline{toc}{section}{Appendix}

\section{Additional Implementation Details}
\label{app:implementation-details}

\subsubsection{Push-Pop of Input and Output Constraints.}
To improve efficiency, we introduce a push-pop mechanism for managing input and
output constraints in Marabou.
This mechanism enables reuse of the encoded neural network and constraints across 
verification queries by allowing constraints to be incrementally added and
removed without re-encoding the network, the constraints, or re-initializing the solver.
Using \emph{push}, the verifier can lock a set of constraints and incrementally
add additional ones; a subsequent \emph{pop} restores the solver state to the
previous configuration.

\section{Soundness of Conflict Reuse}
\label{app:conflict-soundness}

This appendix contains the proofs of Lemma~\ref{lem:monotonicity} and
Theorem~\ref{thm:conflict-reuse} from
Section~\ref{sec:incremental-verification}.

\subsection{Proof of Lemma~\ref{lem:monotonicity}}
\begin{proof}
Assume for contradiction that
$q_2 \land \ell_1 \land \cdots \land \ell_k$
is feasible, and let $x$ be a satisfying input.
Since $q_2 \preceq q_1$, both queries are defined over the same network $f$,
$\mathcal{X}(q_2) \subseteq \mathcal{X}(q_1)$, and the ReLU phase literals
$\ell_1,\ldots,\ell_k$ refer to the same ReLU constraints in both queries.
Therefore, the same input $x$ satisfies
$q_1 \land \ell_1 \land \cdots \land \ell_k$,
contradicting infeasibility.
\qed
\end{proof}

\subsection{Proof of Theorem~\ref{thm:conflict-reuse}}
\begin{proof}
Let $c=\{\ell_1,\ldots,\ell_k\}$ be a conflict clause for $q_1$.
By Definition~\ref{def:conflict-clause}, the subproblem
$q_1 \land \ell_1 \land \cdots \land \ell_k$
is infeasible.
Since $q_2 \preceq q_1$, infeasibility is monotone under refinement by
Lemma~\ref{lem:monotonicity}.
Therefore, the subproblem
$q_2 \land \ell_1 \land \cdots \land \ell_k$
is also infeasible, and hence $c$ is a conflict clause for $q_2$.
\qed
\end{proof}

\section{Soundness of SAT-Based Conflict Reasoning}
\label{app:sat-soundness}

This appendix provides proofs for Lemmas~\ref{lem:sat-pruning}
and~\ref{lem:sat-implied}, which establish the soundness of SAT-based pruning and
propagation using learned conflict clauses.

\subsection{Proof of Lemma~\ref{lem:sat-pruning}}
\begin{proof}
Suppose, for contradiction, that there exists a feasible solution for query $q$
that extends the partial assignment $\alpha$.
Then there exists a complete assignment that satisfies all literals in $\alpha$
and corresponds to a feasible input for $q$.

Since the propositional CNF formula constructed from $\alpha$ and the clauses in
$\mathcal{C}$ is \textsf{UNSAT}, every complete assignment extending $\alpha$ must violate
at least one conflict clause $c \in \mathcal{C}$.
By Definition~\ref{def:conflict-clause}, each conflict clause forbids the
simultaneous satisfaction of all its literals, and any assignment violating $c$
cannot correspond to a feasible solution for $q$.
This contradicts the assumption of feasibility.
Therefore, no extension of $\alpha$ can correspond to a feasible solution for
$q$.
\qed
\end{proof}

\subsection{Proof of Lemma~\ref{lem:sat-implied}}

\begin{proof}
Let $\mathcal{L}$ be the set of literals implied by unit propagation when solving
the SAT instance under assumptions $\alpha$.
By definition of unit propagation, for each $\ell \in \mathcal{L}$, assigning
$\neg \ell$ under $\alpha$ would falsify at least one conflict clause
$c \in \mathcal{C}$.

Suppose, for contradiction, that there exists a feasible solution for $q$ that
extends $\alpha$ but violates some implied literal $\ell \in \mathcal{L}$.
Then the corresponding complete assignment satisfies $\alpha \cup \{\neg \ell\}$
and therefore violates the conflict clause $c$.
By Definition~\ref{def:conflict-clause}, this assignment cannot correspond to a
feasible solution for $q$, contradicting feasibility.

Hence, every feasible extension of $\alpha$ must satisfy all literals in
$\mathcal{L}$.
\qed
\end{proof}

\section{Proofs of Refinement Properties of Verification Use Cases}
\label{app:usecase-refinement}
This appendix provides proofs of the refinement properties for the verification
use cases discussed in Section~\ref{sec:verification-tasks}.

\subsection{Proof of Proposition~\ref{prop:robustness-query-refinement}}
\begin{proof}
The two queries are defined over the same network $f$ and reference input $x_0$.
The input constraints of the two queries are given by
\[
\mathcal{X}(q_i) \;=\;
\left\{ x \in \mathbb{R}^n \;\middle|\;
\|x - x_0\| \le \varepsilon_i
\right\},
\quad
\mathcal{X}(q_j) \;=\;
\left\{ x \in \mathbb{R}^n \;\middle|\;
\|x - x_0\| \le \varepsilon_j
\right\}.
\]
Since $\varepsilon_j < \varepsilon_i$, it follows immediately that
$\mathcal{X}(q_j) \subseteq \mathcal{X}(q_i)$.
The output constraints of both queries are identical, as they are defined by the
same predicate $\mathcal{P}(x_0, x)$.
Therefore, by Definition~\ref{def:refinement}, $q_j$ is a refinement of $q_i$.
\qed
\end{proof}

\subsection{Proof of Proposition~\ref{prop:input-splitting-refinement}}
\begin{proof}
Let $q_0$ be a verification query, and let $q_1$ be a verification query generated
by input splitting of $q_0$. By construction, input splitting partitions the
input domain of $q_0$ into subregions, and thus the input domain of $q_1$ is a
subset of that of $q_0$, i.e.,
\(
\mathcal{X}(q_1) \subseteq \mathcal{X}(q_0).
\)
Moreover, input splitting does not modify the network or the output constraints.
Hence, $\mathcal{Y}(q_1) = \mathcal{Y}(q_0)$, and in particular
$\mathcal{Y}(q_1) \subseteq \mathcal{Y}(q_0)$.
Therefore, by Definition~\ref{def:refinement}, $q_1$ is a refinement of $q_0$.
\qed
\end{proof}

\subsection{Proof of Proposition~\ref{prop:feature-set-query-refinement}}
\begin{proof}
By construction, a successor query $q'$ generated after a \textsf{SAT} or
\textsf{TIMEOUT} outcome is obtained by reinstating a subset of previously freed
features.
Thus, the corresponding fixed feature sets satisfy $S \subset S'$.

The input constraints of the two queries are given by
\begin{multline*}
\mathcal{X}_S =
\left\{
x \in \mathbb{R}^n \;\middle|\;
x_i = (x_0)_i \;\text{for all}\; i \in S
\right\}, \\
\mathcal{X}_{S'} =
\left\{
x \in \mathbb{R}^n \;\middle|\;
x_i = (x_0)_i \;\text{for all}\; i \in S'
\right\}.
\end{multline*}
Since $S \subset S'$, it follows that
$\mathcal{X}_{S'} \subseteq \mathcal{X}_S$.

Both queries enforce the same output constraint, namely preservation of the
predicted class $c$.
Therefore, by Definition~\ref{def:refinement}, $q'$ is a refinement of $q$.
\qed
\end{proof}

\section{Algorithmic Details and Pseudocodes}

\subsection{Robustness Radius Interval Computation}
\label{app:robustness-radius}
This appendix provides pseudocode and a description of the robustness radius
interval computation used in Section~\ref{sec:robustness-verification}.
We present the algorithmic workflow by which lower and upper bounds on the local
robustness radius are iteratively refined through repeated verification queries
until the desired precision is achieved or the time budget is exhausted.
In addition, we highlight the modifications required to enable incremental reuse
of learned conflicts across related verification queries.

\begin{algorithm}[t]
\caption{Robustness Radius Interval Computation with \inc{Incremental Reuse}}
\label{alg:robustness-radius-interval-inc}
\DontPrintSemicolon
\LinesNumbered

\KwIn{
Reference input $x_0$; initial interval $[\varepsilon_{\min}, \varepsilon_{\max}]$;
precision $\delta>0$; time budget $T$;
verification subroutine $\textsc{Verify}(\varepsilon)$ \inc{with optional incremental reuse of inherited conflicts}
}
\KwOut{
Certified bounds $(\underline{\varepsilon}, \overline{\varepsilon})$ such that
$\underline{\varepsilon} \le \varepsilon^\star \le \overline{\varepsilon}$ and
$\overline{\varepsilon}-\underline{\varepsilon}\le \delta$ if completed within $T$.
}

\BlankLine
\textbf{Interface:}
$\textsc{Verify}(\varepsilon, x_0)$ returns $(res, x_{\text{ce}})$ where
$res \in \{\textsf{UNSAT},\textsf{SAT},\textsf{TIMEOUT}\}$ and
$x_{\text{ce}} \neq \bot$ only if $res=\textsf{SAT}$.
\inc{In incremental mode, we call $\textsc{Verify}(\varepsilon, x_0, id, \mathcal{I})$ where $id$ is the query identifier and $\mathcal{I}$ is a set of inherited query identifiers.}\;
\BlankLine

$start \gets \textsc{Now}()$\;\label{ln:init-time}

$\underline{\varepsilon} \gets \varepsilon_{\min}$, \qquad
$\overline{\varepsilon} \gets \varepsilon_{\max}$\; \label{ln:init-bounds}
$stepRatio \gets 1/2$, \qquad
$direction \gets \textsf{down}$\;

\inc{$nextId \gets 0$ \tcp*{counter for verification-query identifiers}}

\BlankLine
\While{$(\overline{\varepsilon}-\underline{\varepsilon}) > \delta$ \textbf{and} $\textsc{Now}()-start < T$}{\label{ln:main-loop}
    $w \gets \overline{\varepsilon}-\underline{\varepsilon}$\;
    \uIf{$direction = \textsf{down}$}{
        $\varepsilon \gets \overline{\varepsilon} - stepRatio \cdot w$\;
    }\Else{
        $\varepsilon \gets \underline{\varepsilon} + stepRatio \cdot w$\;
    }

    \inc{$id \gets nextId;\;\; nextId \gets nextId+1$}\;
    \inc{$\varepsilon_{id} \gets \varepsilon$}\;

    \inc{$\mathcal{I} \gets \{\, id' \mid id' \text{ was issued earlier with } \varepsilon_{id'} > \varepsilon \,\}$}\;
    \label{ln:inheritance}
    $res,\; x_{\text{ce}} \gets \textsc{Verify}(\varepsilon, x_0\inc{,\, id,\, \mathcal{I}})$
    \label{ln:verify-call}
    \tcp*[r]{$x_{\text{ce}}$ is returned only if $res=\textsf{SAT}$}

    \uIf{$res=\textsf{UNSAT}$}{ \label{ln:unsat-case}
        $\underline{\varepsilon} \gets \varepsilon$\;
        $stepRatio \gets 1/2$; $direction \gets \textsf{down}$\;
    }
    \uElseIf{$res=\textsf{SAT}$}{ \label{ln:sat-case}
        $\overline{\varepsilon} \gets \|x_{\text{ce}} - x_0\|_\infty$
        \tcp*[r]{counterexample might yield a tighter bound}
        $stepRatio \gets 1/2$; $direction \gets \textsf{down}$\;
    }
    \Else(\tcp*[f]{\textsf{TIMEOUT}}){ \label{ln:timeout-case}
      \If{$direction = \textsf{down}$}{
          $stepRatio \gets stepRatio / 2$\;
          $direction \gets \textsf{up}$\;
      }
      \Else{
          $direction \gets \textsf{down}$\;
      }
    }
}

\BlankLine
\Return{$\underline{\varepsilon}, \overline{\varepsilon}$}\;\label{ln:return}
\end{algorithm}

Given a neural network $f : \mathbb{R}^n \rightarrow \mathbb{R}^m$, a reference input
$x_0 \in \mathbb{R}^n$, and an output consistency predicate
$\mathcal{P}(x_0, x)$, the objective is to compute certified lower and upper bounds
$\underline{\varepsilon}$ and $\overline{\varepsilon}$ on the local robustness radius
$\varepsilon^\star$ at $x_0$, up to a prescribed precision $\delta > 0$ and within a
time budget $T$.

The procedure is presented in Algorithm~\ref{alg:robustness-radius-interval-inc}.
It relies on a verification subroutine $\textsc{Verify}(...)$ that, given a
perturbation radius $\varepsilon$ and the reference input $x_0$, determines whether
the property $\mathcal{P}(x_0, x)$ holds for all inputs $x$ satisfying
$\|x - x_0\|_\infty \le \varepsilon$. In other words, it returns one of three possible outcomes:
\textsf{UNSAT}, \textsf{SAT}, or \textsf{TIMEOUT}.
In non-incremental mode, the call takes only the parameters
$\varepsilon$ and $x_0$ as input:
$\textsc{Verify}(\varepsilon, x_0)$.
In incremental mode, the call
$\textsc{Verify}(\varepsilon, x_0\inc{,\, id,\, \mathcal{I}})$ additionally takes a
unique query identifier $id$ and a set of inherited query identifiers
$\mathcal{I}$, enabling reuse of learned conflicts.
The elements related to incremental reuse are highlighted in blue in the
algorithm.

The algorithm assumes the initial bracketing interval
$[\varepsilon_{\min}, \varepsilon_{\max}]$ is valid (i.e.
$\textsc{Verify}(\varepsilon_{\min}, x_0)$ returns \textsf{UNSAT}, and
$\textsc{Verify}(\varepsilon_{\max}, x_0)$ returns \textsf{SAT}).
After initializing the time budget and bounds
(Lines~\ref{ln:init-time}--\ref{ln:init-bounds}),
the procedure iteratively refines the current interval
$[\underline{\varepsilon}, \overline{\varepsilon}]$ within a main loop
(Line~\ref{ln:main-loop}).

In each iteration, a candidate perturbation radius $\varepsilon$ is selected from
the current interval and issued as a verification query.
In incremental mode, the query is assigned a fresh identifier and augmented with a
set of inherited identifiers corresponding to earlier queries at larger radii
(Line~\ref{ln:inheritance}).
The verification subroutine is then invoked
(Line~\ref{ln:verify-call}).

The outcome of the verification determines how the interval is updated.
If the result is \textsf{UNSAT}, robustness is certified at radius $\varepsilon$ and
the lower bound $\underline{\varepsilon}$ is raised accordingly
(Line~\ref{ln:unsat-case}).
If the result is \textsf{SAT}, a counterexample is returned and used to tighten the
upper bound $\overline{\varepsilon}$ based on its distance from $x_0$
(Line~\ref{ln:sat-case}).
If the verification attempt times out, the algorithm conservatively adapts its
search direction and step size without updating the bounds
(Line~\ref{ln:timeout-case}).

The procedure terminates once the interval width
$\overline{\varepsilon} - \underline{\varepsilon}$ falls below the target precision
$\delta$, or when the available time budget is exhausted.
The final bounds $(\underline{\varepsilon}, \overline{\varepsilon})$ are then
returned (Line~\ref{ln:return}).

\subsection{Input-Split Verification with Incremental Reuse}
\label{app:input-split}
This appendix provides pseudocode and a description of the input-split verification
procedure used for neural network properties in Section~\ref{sec:input-splitting}.
We present the algorithmic workflow by which the input space is recursively
partitioned when verification queries time out, enabling the solver to make progress
on difficult instances.
In addition, we highlight the modifications required to enable incremental reuse
of learned conflicts across related verification queries.

\begin{algorithm}[t]
\caption{Input-Split Verification with \inc{Incremental Reuse}}
\label{alg:input-split-inc}
\DontPrintSemicolon
\LinesNumbered

\KwIn{
Neural network $f : \mathbb{R}^n \rightarrow \mathbb{R}^m$; property $\phi$; initial timeout $T_0$; timeout factor $\alpha$; verification subroutine $\textsc{Verify}(\mathcal{Q}, T)$ \inc{with optional incremental reuse of inherited conflicts}
}
\KwOut{
Verification result: $res \in \{\textsf{UNSAT},\textsf{SAT},\textsf{TIMEOUT}\}$.
}

\BlankLine
\textbf{Interface:}
$\textsc{Verify}(\mathcal{Q}, T)$ returns $res \in \{\textsf{UNSAT},\textsf{SAT},\textsf{TIMEOUT}\}$ where $\mathcal{Q}$ is a verification query with input bounds and $T$ is the timeout.
\inc{In incremental mode, we call $\textsc{Verify}(\mathcal{Q}, T, id, \mathcal{I})$ where $id$ is a query identifier and $\mathcal{I}$ is a set of inherited (ancestor) query identifiers.}\;
\BlankLine

\inc{$nextId \gets 0$ \tcp*{counter for query identifiers}}
$\mathcal{Q}_0 \gets \textsc{ConstructQuery}(f, \phi)$ \tcp*{initial query}
\Return{$\textsc{InputSplitSearch}(\mathcal{Q}_0, T_0\inc{,\, \emptyset})$}\;

\BlankLine
\Fn(){$\textsc{InputSplitSearch}(\mathcal{Q}, T\inc{,\, \mathcal{I}})$}{
    \inc{$id \gets nextId;\;\; nextId \gets nextId+1$}\;
    $res \gets \textsc{Verify}(\mathcal{Q}, T\inc{,\, id,\, \mathcal{I}})$\;\label{ln:verify-query}
    
    \uIf{$res=\textsf{SAT}$ or $res=\textsf{UNSAT}$}{\label{ln:conclusive}
        \Return{$res$}\;
    }
    
    $v \gets \textsc{SelectWidestInput}(\mathcal{Q})$\;\label{ln:select-input}
    
    $[\ell, u] \gets \textsc{GetBounds}(\mathcal{Q}, v)$\;
    $m \gets (\ell + u)/2$\;
    \inc{$\mathcal{I}' \gets \mathcal{I} \cup \{id\}$ \tcp*{inherit from parent}}
    
    $\mathcal{Q}_L \gets \textsc{TightenUpper}(\mathcal{Q}, v, m)$\;\label{ln:split-lower}
    $res_L \gets \textsc{InputSplitSearch}(\mathcal{Q}_L, \alpha \cdot T\inc{,\, \mathcal{I}'})$\;
    
    \If{$res_L = \textsf{SAT}$}{\label{ln:lower-sat}
        \Return{\textsf{SAT}}\;
    }
    
    $\mathcal{Q}_R \gets \textsc{TightenLower}(\mathcal{Q}, v, m)$\;\label{ln:split-upper}
    $res_R \gets \textsc{InputSplitSearch}(\mathcal{Q}_R, \alpha \cdot T\inc{,\, \mathcal{I}'})$\;
    
    \If{$res_R = \textsf{SAT}$}{\label{ln:upper-sat}
        \Return{\textsf{SAT}}\;
    }
    
    \uIf{$res_L = \textsf{UNSAT}$ and $res_R = \textsf{UNSAT}$}{\label{ln:both-unsat}
        \Return{\textsf{UNSAT}}\;
    }
    \Else{\label{ln:timeout}
        \Return{\textsf{TIMEOUT}}\;
    }
}
\end{algorithm}

Given a neural network $f : \mathbb{R}^n \rightarrow \mathbb{R}^m$, a property
$\phi$ to verify, the objective is to determine whether the property holds over the
input space. The procedure is presented in Algorithm~\ref{alg:input-split-inc}.
As in the previous use case, it relies on a verification subroutine
$\textsc{Verify}(...)$ that, given a verification query $\mathcal{Q}$ with input
bounds and a timeout $T$, determines whether the property holds for all inputs
in the constrained space.
In other words, it returns one of three possible outcomes:
\textsf{UNSAT}, \textsf{SAT}, or \textsf{TIMEOUT}.
In non-incremental mode, the call takes only the parameters
$\mathcal{Q}$ and $T$ as input:
$\textsc{Verify}(\mathcal{Q}, T)$.
In incremental mode, the call
$\textsc{Verify}(\mathcal{Q}, T\inc{,\, id,\, \mathcal{I}})$ additionally takes a
unique query identifier $id$ and a set of inherited query identifiers
$\mathcal{I}$, enabling reuse of learned conflicts.
The elements related to incremental reuse are highlighted in blue in the
algorithm.

The algorithm constructs an initial verification query $\mathcal{Q}_0$ from the
network $f$ and property $\phi$, then invokes the recursive search procedure
$\textsc{InputSplitSearch}$. At each node, the algorithm attempts to verify the current query $\mathcal{Q}$ with
timeout $T$ (Line~\ref{ln:verify-query}).
In incremental mode, the query is assigned a fresh identifier and augmented with a
set of inherited identifiers from ancestor queries.
If the verification returns a conclusive result (\textsf{SAT} or \textsf{UNSAT}),
that result is immediately returned (Line~\ref{ln:conclusive}).

When verification times out, the algorithm selects the input variable with the
widest valid range (Line~\ref{ln:select-input}) and splits its domain at the
midpoint, creating two disjoint subspaces.
The parent query's identifier is added to the inherited set $\mathcal{I}'$ for child
queries, enabling conflict reuse.

Two subproblems are created by tightening the upper bound (Line~\ref{ln:split-lower})
and lower bound (Line~\ref{ln:split-upper}) of the selected input variable.
Each recursive call uses increased timeout $\alpha \cdot T$.
If either branch finds a counterexample (\textsf{SAT}), the query returns
\textsf{SAT} (Lines~\ref{ln:lower-sat},~\ref{ln:upper-sat}).
If the \textsf{UNSAT} partitions collectively cover the entire input space, the
query is \textsf{UNSAT} (Line~\ref{ln:both-unsat}).
Otherwise, the result is \textsf{TIMEOUT} (Line~\ref{ln:timeout}).

\subsection{Minimal Sufficient Feature Set Extraction: Binary-Sequential Procedure}
\label{app:explainability}

This appendix provides pseudocode and a description of the recursive binary search
workflow used in Section~\ref{sec:msfs-extraction}.
The procedure repeatedly invokes a verification subroutine to determine whether a
candidate set of features can be freed while preserving the reference prediction.
The underlying binary search strategy follows~\cite{verix_plus} and is adapted here
to support incremental reuse of learned conflicts across related verification queries,
which arise along \textsf{SAT} and \textsf{TIMEOUT} branches of the search.

\begin{algorithm}[t]
\caption{Minimal Sufficient Feature Set Extraction with \inc{Incremental Reuse}}
\label{alg:binary-sequential-inc}
\DontPrintSemicolon
\LinesNumbered

\KwIn{
Neural network $f$; reference input $x_0 \in \mathbb{R}^n$; target class
$c = \arg\max_j f_j(x_0)$; feature universe $U=\{1,\dots,n\}$;
verification subroutine $\textsc{Verify}(\overline{S})$
\inc{with optional incremental reuse of inherited conflicts}.
}
\KwOut{
feature set that must remain fixed ($S_{\text{fixed}}$),
feature set certified freeable ($S_{\text{freed}}$).
}

\BlankLine
\textbf{Interface:}
$\textsc{Verify}(\overline{S}, x_0)$ returns $res \in \{\textsf{UNSAT},\textsf{SAT},\textsf{TIMEOUT}\}$ and
verifies if $U \setminus \overline{S}$ is a sufficient feature set for $x_0$.
\inc{In incremental mode, we call $\textsc{Verify}(\overline{S}, x_0, id, \mathcal{I})$ where $id$ is a query identifier and $\mathcal{I}$ is a set of inherited (ancestor) query identifiers.}\;
\BlankLine

$S_{\text{fixed}} \gets \emptyset$;\quad $S_{\text{freed}} \gets \emptyset$\;\label{ln:init-sets}
\inc{$nextId \gets 0$ \tcp*{counter for query identifiers}}

\BlankLine
$\textsc{BinaryMinimalSufficiencySearch}(U, \inc{\emptyset})$\;\label{ln:top-call}
\Return{$S_{\text{fixed}}, S_{\text{freed}}$}\;

\BlankLine
\Fn(){$\textsc{BinaryMinimalSufficiencySearch}(S_{\text{cand}}, \inc{\mathcal{I}})$}{
    \uIf{$|S_{\text{cand}}| = 1$}{\label{ln:base-case-1}
        \inc{$id \gets nextId;\;\; nextId \gets nextId+1$}\;
        $res \gets \textsc{Verify}(S_{\text{freed}} \cup S_{\text{cand}}, x_0\inc{,\, id,\, \mathcal{I}})$\;\label{ln:verify-single}
        \uIf{$res=\textsf{UNSAT}$}{\label{ln:single-unsat}
            $S_{\text{freed}} \gets S_{\text{freed}} \cup S_{\text{cand}}$\;
        }
        \Else(\tcp*[f]{\textsf{SAT} or \textsf{TIMEOUT}}){
            $S_{\text{fixed}} \gets S_{\text{fixed}} \cup S_{\text{cand}}$\;\label{ln:single-sat1}
        }
        \Return{}\;
    }

    $(S_{\text{left}}, S_{\text{right}}) \gets \textsc{Split}(S_{\text{cand}})$\;\label{ln:split}
    \inc{$id_{\text{left}} \gets nextId;\;\; nextId \gets nextId+1$}\;
    $res_{\text{left}} \gets \textsc{Verify}(S_{\text{freed}} \cup S_{\text{left}}, x_0\inc{,\, id_{\text{left}},\, \mathcal{I}})$\;\label{ln:verify-lo}

    \uIf{$res_{\text{left}}=\textsf{UNSAT}$}{\label{ln:lo-unsat}
        $S_{\text{freed}} \gets S_{\text{freed}} \cup S_{\text{left}}$\;

        \inc{$id_{\text{right}} \gets nextId;\;\; nextId \gets nextId+1$}\;
        $res_{\text{right}} \gets \textsc{Verify}(S_{\text{freed}} \cup S_{\text{right}}, x_0\inc{,\, id_{\text{right}},\, \mathcal{I}})$\;\label{ln:verify-hi}

        \uIf{$res_{\text{right}}=\textsf{UNSAT}$}{\label{ln:hi-unsat}
            $S_{\text{freed}} \gets S_{\text{freed}} \cup S_{\text{right}}$\;
        }
        \Else(\tcp*[f]{\textsf{SAT} or \textsf{TIMEOUT}}){\label{ln:hi-sat}
            $\textsc{BinaryMinimalSufficiencySearch}(S_{\text{right}}, \inc{\mathcal{I} \cup \{id_{\text{right}}\}})$\;
        }
    }
    \Else(\tcp*[f]{\textsf{SAT} or \textsf{TIMEOUT}}){\label{ln:lo-sat}
  
      \uIf{$|S_{\text{left}}| = 1$}{\label{ln:base-case-2}
          $S_{\text{fixed}} \gets S_{\text{fixed}} \cup S_{\text{cand}}$\;\label{ln:single-sat2}
      }
      \Else(){
          $\textsc{BinaryMinimalSufficiencySearch}(S_{\text{left}}, \inc{\mathcal{I} \cup \{id_{\text{left}}\}})$\;
      }
      $\textsc{BinaryMinimalSufficiencySearch}(S_{\text{right}})$\;
    }
}
\end{algorithm}

Given a neural network $f : \mathbb{R}^n \rightarrow \mathbb{R}^m$ and a reference
input $x_0 \in \mathbb{R}^n$, the goal is to extract a minimal sufficient feature
set for the predicted class $c=\arg\max_j f_j(x_0)$.
Equivalently, the goal is to partition the feature universe $U=\{1,\dots,n\}$ into
two disjoint sets: $S_{\text{fixed}}$, the set of features that must remain fixed
to their values in $x_0$, and $S_{\text{freed}}$, the set of features certified
freeable, such that fixing $S_{\text{fixed}}$ suffices to preserve the prediction
under arbitrary assignments to the features in $S_{\text{freed}}$.

The procedure is presented in Algorithm~\ref{alg:binary-sequential-inc}.
It relies on a verification subroutine $\textsc{Verify}(...)$ that, given a set
of candidate features $\overline{S}$ to free, determines whether
$U \setminus \overline{S}$ constitutes a sufficient feature set for $x_0$.
In other words, it returns one of three possible outcomes:
\textsf{UNSAT}, \textsf{SAT}, or \textsf{TIMEOUT}.
In non-incremental mode, the call takes only the parameters
$\overline{S}$ and $x_0$ as input:
$\textsc{Verify}(\overline{S}, x_0)$.
In incremental mode, the call
$\textsc{Verify}(\overline{S}, x_0\inc{,\, id,\, \mathcal{I}})$ additionally takes a
unique query identifier $id$ and a set of inherited query identifiers
$\mathcal{I}$, enabling reuse of learned conflicts.
The elements related to incremental reuse are highlighted in blue in the
algorithm.

Algorithm~\ref{alg:binary-sequential-inc} maintains the evolving sets
$S_{\text{fixed}}$ and $S_{\text{freed}}$ (Line~\ref{ln:init-sets}).
At each step it processes a candidate set of features $S_{\text{cand}}$ and attempts
to certify as many of them as freeable.
This is done by invoking the verification subroutine
$\textsc{Verify}(S_{\text{freed}} \cup S_{\text{cand}}, x_0)$
(Line~\ref{ln:verify-single}), which checks whether freeing the proposed features
preserves the predicted class $c$ while all remaining features are fixed to their
values in $x_0$.

An \textsf{UNSAT} result certifies that all features in the proposed set may be
freed, and the algorithm adds them to $S_{\text{freed}}$
(Lines~\ref{ln:single-unsat}, \ref{ln:lo-unsat}, and~\ref{ln:hi-unsat}).
A \textsf{SAT} or \textsf{TIMEOUT} result indicates that the proposed set cannot be
freed as a whole, and the algorithm refines the candidate by splitting it into
smaller subsets (Line~\ref{ln:split}) and continuing the search recursively on the
subset(s) that remain uncertified.

The recursion terminates at singletons (Lines~\ref{ln:base-case-1} and~\ref{ln:base-case-2}).
If a singleton feature cannot be freed, it is classified as necessary and added to
$S_{\text{fixed}}$ (Lines~\ref{ln:single-sat1} and~\ref{ln:single-sat2}).
Otherwise, it is added to $S_{\text{freed}}$ (Line~\ref{ln:single-unsat}).
By iteratively certifying freeable groups and isolating necessary singletons, the
procedure constructs $S_{\text{fixed}}$ as an explanation whose fixed values suffice
to preserve the prediction, while maximizing $S_{\text{freed}}$.

In incremental mode, each call to $\textsc{Verify}$ is issued with a query identifier
and a set of inherited identifiers $\mathcal{I}$, representing ancestor queries
whose learned conflicts may be reused.
As shown in Algorithm~\ref{alg:binary-sequential-inc}, inheritance is updated only
along branches where the current candidate set is not certified freeable, i.e., on
\textsf{SAT} and \textsf{TIMEOUT} outcomes.
Concretely, when the verification of a subset fails, the identifier of that query
is added to the inherited set passed to the recursive call
(Lines~\ref{ln:hi-sat} and~\ref{ln:lo-sat}).
In contrast, \textsf{UNSAT} outcomes certify freeability and do not induce further
recursion, and therefore do not require extending the ancestry set.

This design ensures that conflict inheritance is applied along
refinement branches, enabling effective reuse while preserving soundness.

\end{document}